\DeclareMathOperator{\lcm}{lcm}
\begin{document}

\title{Scalable Scheduling for Industrial Time-Sensitive Networking: A Hyper-flow Graph Based Scheme}

\author{Yanzhou~Zhang,
	Cailian~Chen,~\IEEEmembership{Member,~IEEE,}
	Qimin~Xu,~\IEEEmembership{Member,~IEEE,}
	Shouliang~Wang,
	Lei~Xu,
	and~Xinping~Guan,~\IEEEmembership{Fellow,~IEEE}
}

\markboth{Journal of \LaTeX\ Class Files,~Vol.~14, No.~8, August~2021}%
{Shell \MakeLowercase{\textit{et al.}}: A Sample Article Using IEEEtran.cls for IEEE Journals}

\IEEEpubid{0000--0000/00\$00.00~\copyright~2021 IEEE}

\maketitle

\begin{abstract}
Industrial Time-Sensitive Networking (TSN) provides deterministic mechanisms for real-time and reliable flow transmission. Increasing attention has been paid to efficient scheduling for time-sensitive flows with stringent requirements such as ultra-low latency and jitter. In TSN, the fine-grained traffic shaping protocol, cyclic queuing and forwarding (CQF), eliminates uncertain delay and frame loss by cyclic traffic forwarding and queuing. However, it inevitably causes high scheduling complexity. Moreover, complexity is quite sensitive to flow attributes and network scale. The problem stems in part from the lack of an attribute mining mechanism in existing frame-based scheduling. For time-critical industrial networks with large-scale complex flows, a so-called hyper-flow graph based scheduling scheme is proposed to improve the scheduling scalability in terms of schedulability, scheduling efficiency and latency \& jitter. The hyper-flow graph is built by aggregating similar flow sets as hyper-flow nodes and designing a hierarchical scheduling framework. The flow attribute-sensitive scheduling information is embedded into the condensed maximal cliques, and reverse maps them precisely to congestion flow portions for re-scheduling. Its parallel scheduling reduces network scale induced complexity. Further, this scheme is designed in its entirety as a comprehensive scheduling algorithm GH$^2$. It improves the three criteria of scalability along a Pareto front. Extensive simulation studies demonstrate its superiority. Notably, GH$^2$ is verified its scheduling stability with a runtime of less than 100 ms for 1000 flows and near 1/430 of the SOTA FITS method for 2000 flows.
\end{abstract}

\begin{IEEEkeywords}
Industrial Time-Sensitive Networking, scalable scheduling, hyper-flow graph, hierarchical framework
\end{IEEEkeywords}

\section{Introduction}
\IEEEPARstart{A}{s} a basic infrastructure of the Industrial Internet of Things, device networking in factory automation is incorporating more and more sensors, controllers and actuators to build ubiquitous networked cyber-physical systems. Massive network applications require ultra-low latency and jitter, the most critical quality of service (QoS) metrics. For example, industrial automation and power grid systems require tight deterministic latency in the range of milliseconds to microseconds \cite{8458130}. The jitter is also bounded within a few microseconds \cite{7883994}. These drive the industrial network evolution to ultra-low latency and ultra-high reliability.

Time-Sensitive Networking (TSN), an Ethernet extension under development from IEEE 802.1\texttrademark\ TSN Task Group, empowers Ethernet bridge networks determinism by a capability collection governed via a series of standards. It integrates mechanisms such as traffic shaping, bandwidth policing and clock synchronization to control flow forwarding in a precise manner with bounded low latency \& jitter and extremely low frame loss. Specifically, IEEE 802.1Qbv \cite{8613095} builds a gate-operation mechanism named time-aware shaper during the forwarding process. Inside the output port of TSN supported network devices, this shaper works to manage traffic forwarding queues and time through a gate control list (GCL). It records sequential gate states and state durations. Following this, TSN devices transmit flows in queues orderly and cyclically. Similarly, IEEE 802.1Qci \cite{8064221} constructs an en-queue policing mechanism for flow caching inside every TSN device. The flow forwarding between these devices is synchronized with IEEE 802.1AS \cite{9121845}. Moreover, a ping-pong gate configuration model, called cyclic queuing and forwarding (CQF), is given as IEEE 802.1Qch \cite{7961303} to implement the transmission determinism with these mechanisms. Nevertheless, further flow scheduling is desired for the real-time transmission of massive flows with ultra-low latency and jitter limits. It is considered as the deterministic scheduling problem (DSP).
\IEEEpubidadjcol

Since the advent of TSN, related research for DSP has been widely conducted and discussed. This problem is first formalized with systematic constraints \& goals and proved to be NP-hard \cite{2997465}. It is solved by Satisfiability Model Theory \cite{8430062} or Integer Linear Programming \cite{8607243}, which show a high complexity and are hardly adaptable to large-scale complex industrial flow scenarios. Thus, the rapid schedulability assessment \cite{3356401, 8757948}, and even further, the efficient determinism scheme design is desired to achieve the scalable scheduling capability. With such a vision, the flow incremental or partitioning scheduling methods are proposed in works \cite{8186237, 9407828, 9893358, 9684566, 8889667}. Typically, Quan \textit{et al.} \cite{9407828} designs a flow injection-time scheduling (FITS) method that allocates flows incrementally with reasonable slot resources by injection offset. Guo \textit{et al.} \cite{9893358} co-designs the flow incremental ordering and injection offset selection by a mapping score metric-based scheduling (MSS) method. In addition, Atallah \textit{et al.} \cite{8889667} proposes an iterated flow partition scheduling method, where a degree of conflict-aware stream partitioning (DASP) strategy is adopted to optimize scheduling partitioning. Compared to global scheduling, these above ways segment flows into serial and interrelated scheduling blocks to downsize the network scale of each scheduling step. It facilitates the balance between scheduling optimality and efficiency. However, the complexity of these serial patterns is still high and difficult to improve. And even more so, their frame-based patterns during each block scheduling lack in attribute mining mechanism that inevitably possesses flow attribute induced high complexity.

To further cope with the complexity dilemma, the correlation between TSN mechanisms and flow attributes \cite{9809824, 9812895, 9714183} is deeply analyzed to optimize the key scheduling processes, like constraint satisfiability and goal optimality checking. A divisibility theory-based flow sequence analysis method \cite{9714183} is developed to metric the scheduling space and elaborate flow confluences on time slots under the CQF model. Its building flow graph converts the frame-based scheduling into the equivalent flow graph-based pattern for efficient scheduling. Even so, its flow scale induced exponential graph complexity restricts the scheduling efficiency. Therefore, the current DSP is desired to schedule with better scalability, which means the faster scheduling with superior schedulability and latency \& jitter performances.

This paper focuses on improving the scheduling efficiency and scalability under the QoS and determinism assurances for large-scale complex flow scenarios. With the CQF-configured network, we holistically consider the scheduling complexity and optimality and deeply mine flow attribute-driven scheduling features. A hyper-flow graph based scheme is developed to innovate traditional serial and frame-based scheduling patterns. With this scheme, the critical causes of scheduling complexity, network scale (including device and flow scale) and flow attributes, are significantly suppressed in their impact on scheduling. Meanwhile, the schedulability and QoS performances are well-guaranteed and even enhanced. The contributions of the scheme are summarized as follows:
\begin{itemize}
\item[$ \bullet $] A hyper-flow graph based methodology is proposed to design and optimize the flow scheduling processes systematically. Instead of traditional frame-based patterns, the attribute-driven hyper-flow graph maps/embeds their flow attribute-sensitive redundant scheduling information into equivalent and less maximal hyper-flow cliques. Also, the re-scheduling information is precisely reverse mapped without violent retrieve by introducing the conflict clique. These ways suppress flow attribute induced complexity and drive the Pareto-optimized scheduling.

\item[$ \bullet $] A hierarchical scheduling framework is constructed to balance scheduling efficiency and optimality for large-scale complex flows. With the unified design by hyper-flow graphs, four progressive phases are closely interlinked, including lightweight flow partitioning, parallel partition scheduling, parallel flow synthesizing and precise flow re-scheduling. These two parallel phases reduce the flow and device scale induced scheduling complexity, respectively, and the last phase improves the schedulability and optimality with precise flow fine-tuning.

\item[$ \bullet $] A hyper-flow graph based hierarchical (GH$^2$) algorithm designs the scheduling scheme in its entirety. Guided by the graph, an attribute-driven partitioning strategy, a parallel hyper-flow graph based scheduling (HFG) method, a parallel Bron-Kerbosch synthesizing method and a conflict clique based re-scheduling (CCR) method are designed comprehensively and interlinked hierarchically. It improves the schedulability, scheduling efficiency and QoS performances along a Pareto front.
\end{itemize}

The rest of this paper is organized as follows. First, Section II establishes the CQF-configured network model and formulates its corresponding scheduling problem. Then, the hyper-flow graph based scheme is illustrated in Section III, including the hyper-flow graph based methodology, hierarchical framework and comprehensive flow scheduling algorithm. Section IV evaluates the performances of GH$^2$, and Section V makes the final conclusion.

\section{Network Model and Problem Formulation}
\subsection{Network and Flow Model}
For the target network with the determinism demands, we model it as a directed graph $ \mathcal{N}=<\mathcal{V,E}> $, where $ \mathcal{V} $ denotes a set of network nodes $ \nu_{\kappa} $, consisting of terminal hosts $ \mathcal{V}^h $ and TSN switches $ \mathcal{V}^s $. $ \mathcal{E} $ denotes the directed physical links $ \varepsilon_\iota $.

In this network, the time-sensitive applications communicate through information flows $ \mathcal{F} $. Each of them $ f_i $ is characterized by the attribute tuple $ \{ l_i, p_i, b_i, d_i, j_i, R_i \} $. These denote the frame length, flow period, baseline time of generation, maximum allowable latency, maximum allowable jitter and transmission route, respectively. The flow route $ R_i $ is predefined as the following form of link association:
\begin{equation}
	R_i = [ \varepsilon_{i, 0}, \varepsilon_{i, 1} \cdots, \varepsilon_{i, h_i} ], 
\nonumber
\end{equation}
which connects the links between the source and destination host nodes with the hop count $ h_i $.

\begin{figure}[!b]
	\centerline{\includegraphics[width=8cm]{./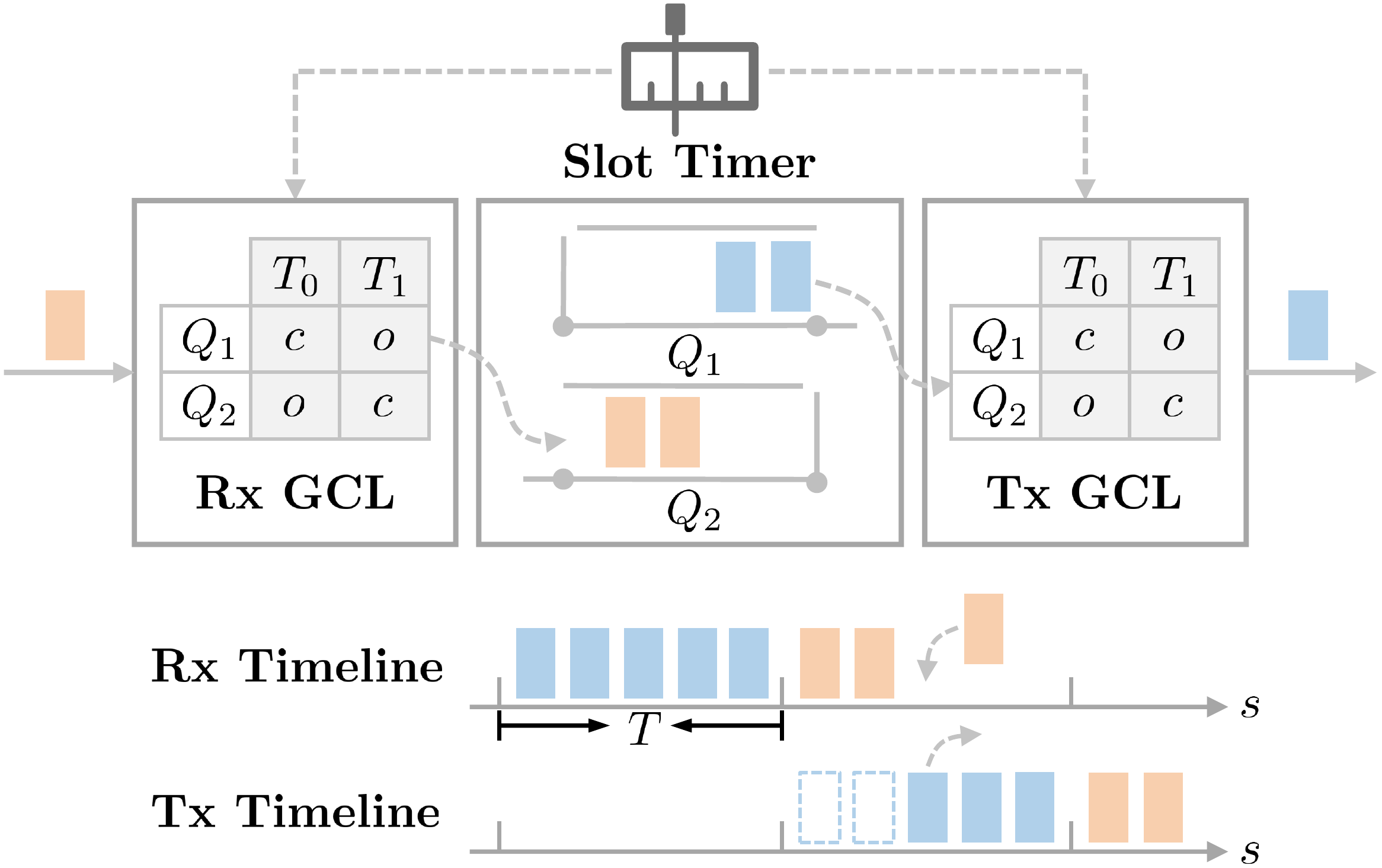}}
	\caption{The CQF model for each output port of TSN switches}
	\label{CQF}
\end{figure}
\begin{figure}[!t]
	\centerline{\includegraphics[width=5.7cm]{./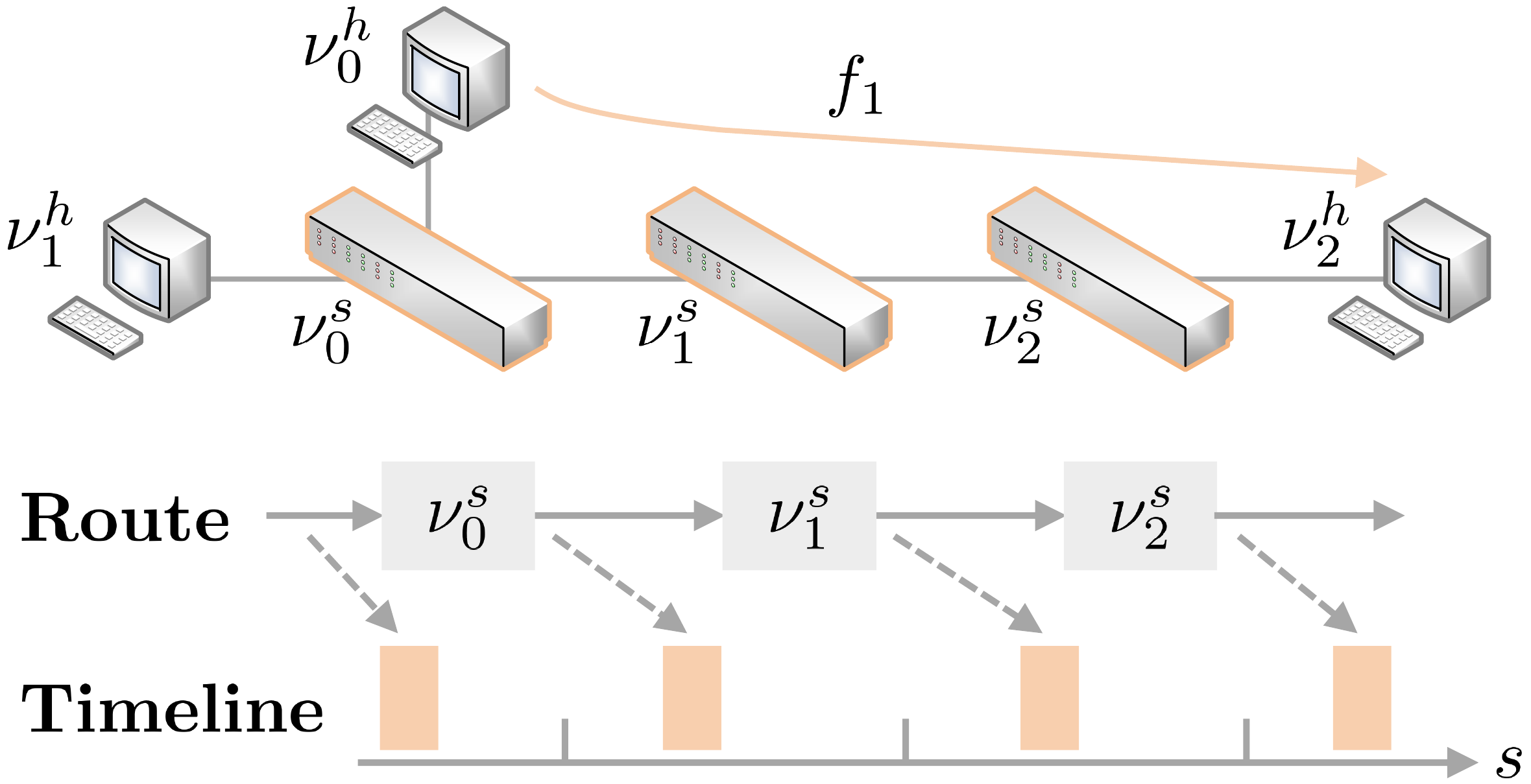}}
	\caption{Slot-by-slot flow transmission under the CQF model}
	\label{CQF2}
\end{figure}
To guarantee the deterministic transmission of target flows, that is, the bounded latency and jitter, all the output ports of TSN switches are implemented with two statically configured Ping-Pong queues from IEEE 802.1Qch, called the CQF model. As shown in Fig. \ref{CQF}, it performs en-queue and de-queue operations alternately with a predetermined duration $ T $. The transmission and reception control between both nodes of link $ \varepsilon_\iota $ are precisely aligned based on IEEE 802.1AS. In this way, the transmission process is divided into numbered time slots $ s \in \mathcal{S} $ with length $ T $, and the target flows are transmitted and queued along their route $ R_i $ slot after slot like Fig. \ref{CQF2}. The end-to-end determinism of flow transmission is assured with the bounded latency
\begin{equation}
	[h_i \cdot T-T, h_i \cdot T+T]
\end{equation}
and jitter $ [0, 2 \cdot T] $. Considering the slotted transmission behavior, partial flow attributes are converted into the slot expression as $ \mathring{p}_i, \mathring{b}_i, \mathring{d}_i, \mathring{j}_i $, given by $ \frac{p_i}{T} $, $ \lfloor \frac{b_i}{T} \rfloor $, $ \lfloor \frac{d_i}{T} \rfloor $, $ \lfloor \frac{j_i}{T} \rfloor $, respectively.

\begin{figure}[!h]
	\centerline{\includegraphics[width=5.7cm]{./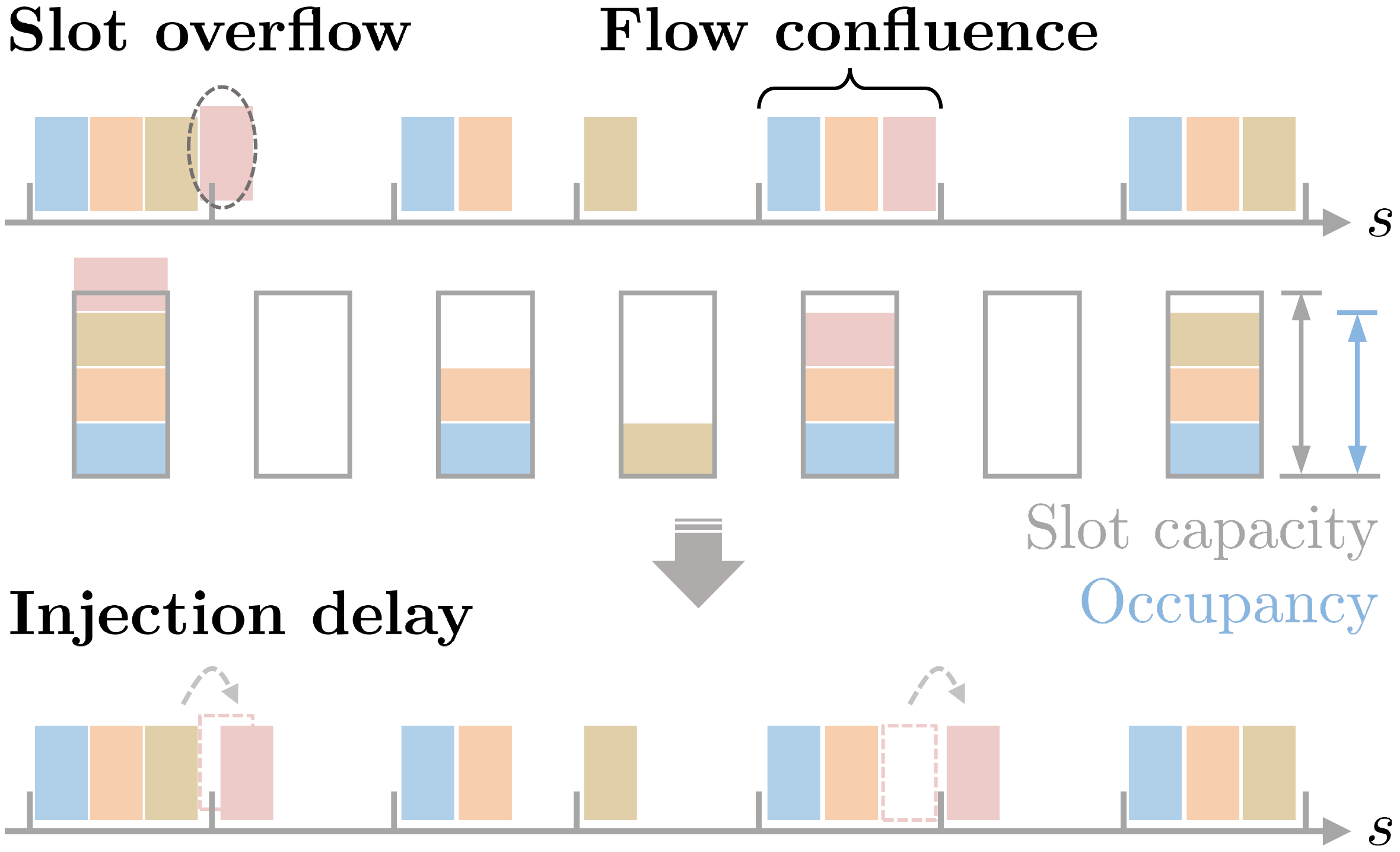}}
	\caption{Injection offset design for CQF-based flow scheduling}
	\label{injection}
\end{figure}
As analyzed in \cite{9155434}, unscheduled flow injection into the network is easy to cause chaotic flow confluences on the same slot. It further results in slot congestion and overflow due to excessive occupancy beyond the slot capacity. Further, the potential frame loss occurs and the deterministic mechanism is broken. To avoid this, the injection delay $ o_i $ is introduced to regulate the postponement between flow generation and injection. As shown in Fig. ref{injection}, the offset design of flow injection allows slot resources to be allocated to target flows without overflow, which facilitates flow scheduling more flexibly and effectively. It is worth noting that the injection offset is symbolized as $ o_i $ under a general assumption \cite{9155434,9407828,9714183,8607243,8757948,8186237,9893358,9684566,8889667,9809824,9812895,10001004,10101832,8700610} that all frames of each flow $ f_i $ share the same delay. This assumption reduces the variable scale to boost the scheduling efficiency and eliminates the additional flow jitter to improve QoS performances, but it is not necessary to obey.

\subsection{Hyper-flow Graph Model}
Given the generation periodicity and transmission determinism, target flows are kept forwarding periodically on every link by the uniform injection offset among each flow. Based on this, we cluster the frames of each flow with a unified forwarding slot representation. For any flow $ f_i \in \mathcal{F} $ on link $ \varepsilon_\iota \in R_i $, its frame forwarding slots are modeled as a periodic sequence
\begin{equation}
	\aleph(k_i^{\varepsilon_\iota}) = q_i^{\varepsilon_\iota} +k_i^{\varepsilon_\iota} \cdot \mathring{p_i},
\end{equation}
where the baseline forwarding slot $ q_i^{\varepsilon_\iota} $ is given as $ \mathring{b_i} +o_i +h_{i}^{\varepsilon_\iota} $. $ h_{i}^{\varepsilon_\iota} $ is the passed hop number among $ R_i $, that is, the passed slots after flow injecting. The sequence count $ k_i $ accumulates every periodic frame. The flow attribute integrated slot positioning parameters $ (q_i^{\varepsilon_\iota}, \mathring{p_i}) $ are called as the feature tuple of flow $ f_i $.

\begin{figure}[!h]
	\centerline{\includegraphics[width=3cm]{./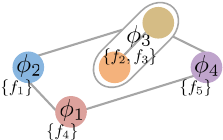}}
	\caption{Schematic of a hyper-flow graph}
	\label{graph}
\end{figure}
In order to efficiently guide flow offset selection, we build a correlation graph on each link $ \varepsilon_\iota \in \mathcal{E} $ to characterize flow confluence relationships. Considering the specific forwarding slots of target flows are dominated by their feature tuples, we aggregate the flow set with the same $ (q_i^{\varepsilon_\iota}, \mathring{p_i}) $ as a hyper-flow $ \phi $ and weight it the frame length sum among the flow set. Then, we take the hyper-flows  $ \Phi^{\varepsilon_\iota} $ as graphical nodes and connect their undirected edges following the adjacency matrix $ \Psi^{\varepsilon_\iota} $ as
\begin{equation}
	\Psi_{\phi_{\alpha}, \phi_{\beta}}^{\varepsilon_\iota} =
	\begin{cases}
		1 & \text{if\ } \gcd(\mathring{p}_{\alpha}, \mathring{p}_{\beta}) | (q_{\alpha}^{\varepsilon_\iota} -q_{\beta}^{\varepsilon_\iota}). \\
		0 & \text{otherwise.}
	\end{cases}
\end{equation}
Like Fig. \ref{graph}, the flow aggregated hyper-flows are correlated by the relationship between their forwarding slot parameters. The hyper-flow graph is desired to elaborate the flow confluences on slots and efficiently extract the key scheduling information.

\subsection{Problem Formulation for CQF-configured network}
With the above models, we construct the optimal scheduling problem for the CQF-configured network with variable flow injection offset. To meet flow QoS demands and network determinism requirements, the CQF-based DSP is demonstrated by the following constraints and goal to regulate the scheduling selection of flow offsets.

\subsubsection{Latency Constraint}
In the CQF model, flow transmission latency is locked as (1) within the switching network. Affected by the injection offset $ o_i $, the worst-case latency $ \vartheta_i $ is extended to $ (o_i +h_i +1) \cdot T $. In order to bound $ o_i $ into a QoS allowed range, we construct the latency constraint as
\begin{equation}
	\begin{aligned}
	& \forall f_i \in \mathcal{F}: \\
	& 0 \leq o_i < \mathring{d}_i -h_i.
	\end{aligned}
\end{equation}

\subsubsection{Jitter Constraint}
Besides the latency, the flow jitter also needs to be within a tolerable limitation $ \mathring{j}_i $ as follows. It is produced by the possible offset fluctuation $ \tilde{o_i} $ within a flow and the endogenous jitter $ 2 \cdot T $ from the CQF model.
\begin{equation}
	\begin{aligned}
	& \forall f_i \in \mathcal{F}: \\
	& \tilde{o_i} + 2 \leq \mathring{j}_i.
	\end{aligned}
\end{equation}

\subsubsection{Capacity Constraint}
During the flow forwarding on the common links, there exit potential flow confluences sharing the same slot $ s^{\varepsilon_\iota} $. Constrained by the limited slot capacity $ \Lambda $, the target flows need to be allocated reasonable slots without overflow. Thus, the constraint is demonstrated as:
\begin{equation}
	\Delta_{i,s}^{\varepsilon_\iota} =
	\begin{cases}
	1 \leftarrow s^{\varepsilon_\iota} \equiv q_i^{\varepsilon_\iota} \bmod \mathring{p}_i, \\
	0 \leftarrow \text{otherwise},
	\end{cases}
\end{equation}
\begin{equation}
	\begin{aligned}
	& \forall \varepsilon_\iota \in \mathcal{E}, s^{\varepsilon_\iota} \in \{1,2,\cdots,C\}: \\
	& \sum_{f_i \in \mathcal{F}^{\varepsilon_\iota}} \Delta_{i,s}^{\varepsilon_\iota} \cdot l_i \leq \Lambda,
	\end{aligned}
\end{equation}
where the boolean variable $ \Delta_{i,s}^{\varepsilon_\iota} $ in (6) works to verify whether flow $ f_i $ is forwarded at slot $ s^{\varepsilon_\iota} $ of link $ \varepsilon_\iota $. It is confirmed by the congruence of checked slot $ s^{\varepsilon_\iota} $ and flow forwarding basetime $ q_i^{\varepsilon_\iota} $ in relation to period $ \mathring{p}_i $. The constraint in (7) restricts the slot occupancy filled by flow confluences on each slot $ s^{\varepsilon_\iota} \in \mathcal{S} $ to no more than the available slot capacity $ \Lambda $. It is given as $ \gamma \cdot \min \{(T -\delta^{\varepsilon_\iota}) \cdot \Gamma, \mho^{\varepsilon_\iota}\} $, where $ \delta^{\varepsilon_\iota} $ is the slot misalignment owing to the clock synchronization errors, $ \Gamma $ is the link bandwidth, $ \mho^{\varepsilon_\iota} $ is the physical queue depth, and $ \gamma $ is the capacity distribution factor used to reserve resources for deterministic transmission of aperiodic time-sensitive flows. Moreover, the capacity constraint covers slots $ \mathcal{S} $ within a bounded scheduling view $ C $ given as the least common multiple of all flow periods, that is, $ \lcm(\mathring{P}^n) $, where $ \mathring{P}^n $ is $ \{ \mathring{p}_1, \mathring{p}_2, \cdots, \mathring{p}_n \} $. With this range, the global determinism of flow scheduling is assured by cyclically repeating the flow forwarding in these $ C $ slots.

\begin{figure*}[!ht]
	\centerline{\includegraphics[width=18.2cm]{./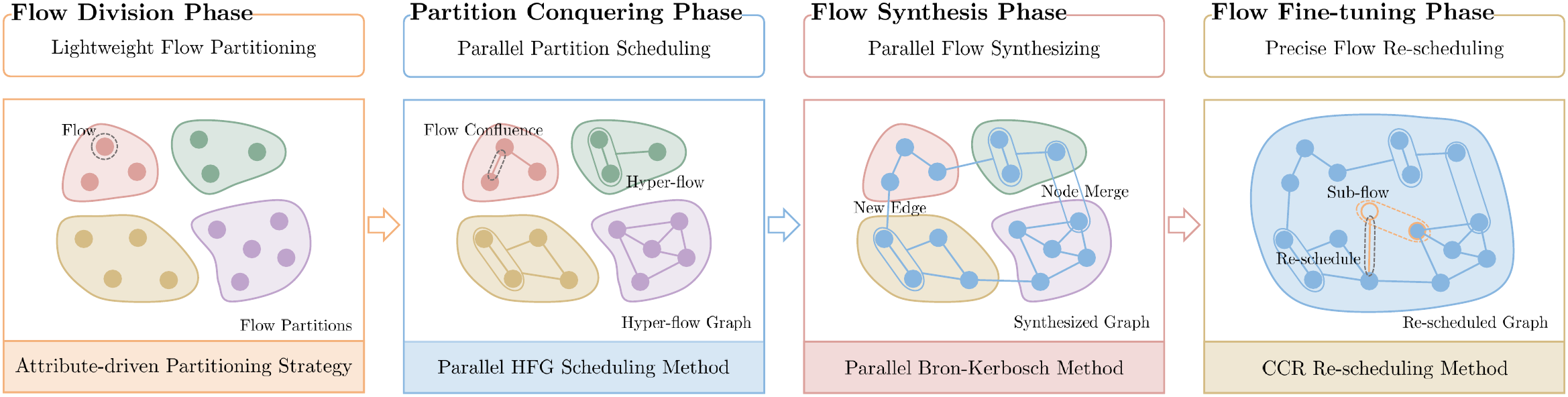}} 
	\caption{Overview of the hyper-flow graph based scheduling scheme}
	\label{method}
\end{figure*}

Among the feasible scheduling space for the above constraints (4)-(7), we aim to enhance the real-time transmission ability of flows $ \mathcal{F} $ and the load balancing ability of slots $ \mathcal{S} $. The former improves latency performances, and the latter helps cope with burst flows and schedulability. Thus, a composite optimization goal is expressed as
\begin{equation}
	\min_{\{o_i\}}{ (1-\rho) \cdot \frac{1}{n} \sum_{f_i}{\frac{\vartheta_i}{\mathring{d}_i}} + \rho \cdot \left[ \frac{\zeta_s^{\varepsilon_\iota}}{\Lambda} \right]^{\max_{s, \varepsilon_\iota}} },
\end{equation}
where the optimization value symbolized as $ \Im $ is the combined weight of the former transmission average real-time rate $ \vartheta_i / \mathring{d}_i $ in target flows and the latter maximum slot occupancy rate $ \zeta_s^{\varepsilon_\iota} / \Lambda $ among all slots on links $ \mathcal{E} $. $ \zeta_s^{\varepsilon_\iota} $ is denoted as the slot occupancy of slot $ s^{\varepsilon_\iota} $ on link $ \varepsilon_\iota $, and $ \rho \in [0, 1] $ is the weight factor working to balance these two goals.

The complexity of general constraint-based optimization problems relies on intensive computation for constraint satisfiability and goal optimality. Considering the formulation of the above CQF-based DSP, the scale of scheduling variables, constraints and goal components expands with the growing network scale including device and flow scale. It causes an increasingly high scheduling complexity. So do the complicating flow attributes, especially the flow periods. According to the TSN profile standard for industrial automation \cite{9714184}, the flow periods of time-sensitive network systems are complex with large range spans. It inflates the scheduling view $ C $ to increase the scheduling complexity. So far, these complexity dilemmas still need to be properly addressed. Thus, we work on exploring a systematic scheme to solve these problems uniformly while well-balancing the scheduling optimality (8), that is, the proposed hyper-flow graph based scheme.

\section{Hyper-flow Graph Based Scheme}
In this section, we elaborate on the hyper-flow graph based scheme including the hyper-flow graph based methodology and four-phase hierarchical framework. It is graphically depicted as Fig. \ref{method}. A comprehensive scheduling algorithm GH$^2$ designs the scheme in its entirety for scalable flow scheduling in CQF-configured TSN. Thus, we build the scheme following the GH$^2$ as Algo. 1 hierarchically.

\begin{algorithm}[!t]
	\caption{GH$^2$ Scheduling Method}
	\KwIn{Target scheduling flows: $ f_i \in \mathcal{F} $ \\
		Flow attributes: $ \{ l_i, p_i, b_i, d_i, j_i, R_i \} \in \mathcal{A} $ \\
		Available slot capacity: $ \Lambda $
	}
	\KwOut{Flow scheduling solution: $ (f_i, o_i) \in \mathcal{O} $
	}
	$ \{ \mathcal{F}_{\varsigma} \} \leftarrow $ Attribute-driven flow partitioning for $ \mathcal{F} $\;
	\If{$ \varsigma > 1 $}
	{
		$ \mathcal{O}, \Phi, \mathcal{L} \leftarrow $ Parallel HFG scheduling($ \{ \mathcal{F}_{\varsigma} \}, \mathcal{A} $)\;
		$ \Upsilon \leftarrow $ Parallel Bron-Kerbosch synthesizing($ \Phi, \mathcal{L} $)\;
	}
	\Else
	{
		$ \mathcal{O}, \Upsilon \leftarrow $ HFG scheduling($ \mathcal{F}, \mathcal{A} $)\;
	}
	$ \dot{\mathcal{O}} \leftarrow $ Precise CCR re-scheduling($ \Upsilon, \Lambda $)\;
	$ \mathcal{O} \leftarrow \mathcal{O} \cup \dot{\mathcal{O}} $\;
\end{algorithm}

Given the complexity analysis of the above DSP, we first explore a divide-and-conquer hierarchical design to reduce the network scale induced complexity. Specifically, the first flow division phase divides target flows $ \mathcal{F} $ into separate partitions $ \mathcal{F}_{\varsigma} $ in line 1, where an attribute $ \mathcal{A} $-driven lightweight strategy is designed to accelerate flow partitioning. For more than one flow partition, the second partition conquering phase spreads them over multiple parallel scheduling sub-problems to reduce the flow scale induced complexity. With the flow scheduling of optimized slot allocations, every partition obtains a local flow offset solution that is aggregated as a preliminary scheduling solution $ \mathcal{O} $ in line 3. Considering the partitions share a common network, the global scheduling information, that is, the slot occupancy by flow confluences, needs to be synthesized from these sub-problems. The third flow synthesis phase works on it link-parallelly in line 4 to suppress the device scale induced complexity. Unusually, when only one partition exists, the synthesis phase is skipped to line 6. For the specific scheduling and synthesizing processes, the hyper-flow graph based methodology is explored by building a hyper-flows $ \Phi $ and weights $ \mathcal{L} $ based graph. It maps/embeds one-by-one slot occupancy within flow attribute-sensitive bound $ C $ into equivalent and less maximal hyper-flow cliques $ \Upsilon $. With the condensed scheduling information, a parallel HFG flow scheduling method in line 3\&6 and a parallel Bron-Kerbosch flow synthesizing method in line 4 are proposed to improve the scheduling efficiency further. By these three phases, GH$^2$ fast-tracks a preliminary scheduling solution and its global maximal hyper-flow cliques under the optimized QoS performances.

After the flow synthesis of scheduling information, there may be some overflow slots where their occupancy exceeds the slot capacity $ \Lambda $ or other desired values. It violates the capacity constraint (7) and affects the scheduling optimality (8). Hence, the last flow fine-tuning phase in line 7 is desired to re-schedule the overflow flows inside these slots. Meanwhile, the above preliminary solution is rectified as slightly as possible to maintain QoS performances. For this vision, the assumption that all frames of each flow share the same delay is broken. We aim to precisely position the overflow slots and re-scheduling overflow flow portions inside them. By extending the hyper-flow graph based methodology, the conflict clique is first introduced and proved equivalent to overflow slots. Then, the re-scheduling information, that is, the overflow slots and their flow portions, is precisely reverse mapped from the maximal hyper-flow cliques without violent retrieve. It drives an efficient CCR re-scheduling method in line 7 to precisely fine-tune flows for schedulability improvement and load-balancing. After this phase, the supplement solution $ \dot{\mathcal{O}} $ is given and constitutes the complete scheduling solution together with $ \mathcal{O} $ in line 8.

The specific processes of the above hyper-flow graph based scheme, that is, GH$^2$ scheduling algorithm, are demonstrated in phases as follows.

\subsection{Lightweight Flow Division Phase}
For the DSP with large-scale complex flows, we divide the target flows $ \mathcal{F} $ into multiple partitions $  \mathcal{F}_{\varsigma} $ for the subsequent separate scheduling. Distinguishing from SOTA partitioning strategies, like spectral clustering \cite{8889667, 9684566} that constructs a similarity graph to cluster flows, we design a lightweight strategy to reduce time costs and maintain scheduling performances. It partitions flows via their sorted typical attitudes.

Considering that load balancing for every partition scheduling sub-problem helps to reduce the slot overflow, we view this as the partition principle of the CQF-based DSP. Then, an intuitive law is observed that the flows with similar attributes facilitate the balance of slot occupancy. For instance, the flows with similar even the same flow period $ p_i $ could be evenly distributed over different slots with fewer flow confluences. The similar frame length $ \mathring{l}_i $ could bring approximate occupancy to balance the slots. Thus, we divide flows based on the similarity of flow attributes as follows.
\begin{equation}
	\begin{aligned}
	& \Theta \leftarrow \{ f_{\alpha}, f_{\beta}, \cdots \in \mathcal{F} \mid f_{\alpha} \prec f_{\beta} \prec \cdots \}, \\
	& \{ \mathcal{F}_1, \mathcal{F}_2, ..., \mathcal{F}_{\varsigma}, ... \} \leftarrow \text{sequentially slice } \Theta,
	\end{aligned}
\end{equation}
where we refer flow attitude $ l_i $, $ p_i $ or the bandwidth consumption depicted by $ l_i / p_i $ as a potential partitioning bases. Target flows are first sorted by one of the above basis and then batch sliced into one-by-one partition $ \mathcal{F}_{\varsigma} $ with a specified scale $ \Xi $.

\subsection{Parallel Partition Conquering Phase}
For the above flow partitions, we schedule them as separate sub-problems to cut their interlink. This way reduces the flow scale in each sub-problem and the scheduling complexity. Due to the separability, a parallel computing structure is adopted to further reduce the complexity compared to the serial patterns. Each conquering unit corresponds to one flow partition and deploys the HFG scheduling method.

\begin{algorithm}[!t]
	\caption{Parallel HFG Scheduling Method}
	\KwIn{Flow partitions: $ \mathcal{F}_{\varsigma} \in \mathcal{F} $ \\
		Flow attributes: $ \{l_i, \mathring{p_i}, \mathring{b_i}, \mathring{d_i}, R_i\} \in \mathcal{A} $
	}
	\KwOut{Flow scheduling solution: $ (f_i, o_i) \in \mathcal{O} $
	}
	\for{$ \mathcal{F}_{\varsigma} $ in $ \mathcal{F} $}
	{	
		$ \Theta_{\varsigma} \leftarrow \{ f_{\alpha}, f_{\beta}, \cdots \in \mathcal{F}_{\varsigma} \mid f_{\alpha} \prec f_{\beta} \prec \cdots \} $\;
		\For{$ f_{\epsilon} $ in $ \Theta_{\varsigma} $}
		{	
			$ o \leftarrow 0 $, $ \bar{o}_{\epsilon} \leftarrow \min \{ p_{\epsilon}, d_{\epsilon}-h_{\epsilon} \} $ and $ \Im_{\epsilon} \leftarrow \infty $\;
			\While{$ o < \bar{o}_{\epsilon} $}
			{
				\For{$ \varepsilon_\iota $ in $ R_{\epsilon} $}
				{
					$ \phi^{\varepsilon_\iota} \leftarrow $ get $ (\mathring{p_{\epsilon}}, q_{\epsilon}^{\varepsilon_\iota}) $ by (2)\;
					\If{$ \phi^{\varepsilon_\iota} $ in $ \Phi_{\varsigma}^{\varepsilon_\iota} $}
					{
						\For{$ \varpi_{\kappa}^{\varepsilon_\iota} $ in $ \Upsilon_{\phi}^{\varepsilon_\iota} $}
						{
							$ \zeta_{\varpi}^{\varepsilon_\iota} \leftarrow l_{\epsilon} + \sum_{\phi \in \varpi_{\kappa}^{\varepsilon_\iota}} \ell_{\phi}^{\varepsilon_\iota} $\;
						}
					}
					\Else
					{
						$ \varkappa_{\phi}^{\varepsilon_\iota} \leftarrow $ get neighbors of $ \phi^{\varepsilon_\iota} $ by (3)\;
						$ \tilde{\Upsilon}_{\phi}^{\varepsilon_\iota} \leftarrow $ get cliques for $ \phi^{\varepsilon_\iota} $ by (12)\;
						
						\For{$ \tilde{\varpi}_{\kappa}^{\varepsilon_\iota} $ in $ \tilde{\Upsilon}_{\phi}^{\varepsilon_\iota} $}
						{
							$ \zeta_{\varpi}^{\varepsilon_\iota} \leftarrow \sum_{\phi \in \tilde{\varpi}_{\kappa}^{\varepsilon_\iota}} \ell_{\phi}^{\varepsilon_\iota} $\;
						}
					}
				}
				$ \bar{\zeta}_o \leftarrow \max{\{ \max{\{ \zeta_{\varpi}^{\varepsilon_\iota} \}, \bar{\zeta}} \} }$\;
				$ \Im_o \leftarrow $ get the optimization value by (11)\;
				\If{$ \Im_o < \Im_{\epsilon} $}
				{
					$ o_{\epsilon} \leftarrow o $, $ \bar{\zeta}_{\epsilon} \leftarrow \bar{\zeta}_o $ and $ \Im_{\epsilon} \leftarrow \Im_o $\;
					$ \{ \phi^{\varepsilon_\iota} \}_{\epsilon} \leftarrow \{ \phi^{\varepsilon_\iota} \} $ and $ \{ \tilde{\Upsilon}_{\phi}^{\varepsilon_\iota} \}_{\epsilon} \leftarrow \{ \tilde{\Upsilon}_{\phi}^{\varepsilon_\iota} \} $\;
				}
				\If{$ \Im_{\epsilon} \leq \check{\Im}_o $}
				{
					\textbf{break}\;
				}
				$ o \leftarrow o+1 $\;
			}
			$ \mathcal{O}_{\varsigma} \vartriangleleft (f_{\epsilon}, o_{\epsilon}) $ and $ \bar{\zeta} \leftarrow \max{\{ \bar{\zeta}_{\epsilon}, \bar{\zeta} \}} $\;
			\For{$ \varepsilon_\iota $ in $ R_{\epsilon} $}
			{
				\If{$ \phi^{\varepsilon_\iota} $ in $ \Phi_{\varsigma}^{\varepsilon_\iota} $}
				{
					$ \ell_{\phi}^{\varepsilon_\iota} \leftarrow \ell_{\phi}^{\varepsilon_\iota} + l_{\epsilon} $\;
				}
				\Else
				{
					$ \Phi_{\varsigma}^{\varepsilon_\iota} \vartriangleleft \phi^{\varepsilon_\iota} $, $ \ell_{\phi}^{\varepsilon_\iota} \leftarrow l_{\epsilon} $, $ \Upsilon_{\varsigma}^{\varepsilon_\iota} \vartriangleleft \breve{\Upsilon}_{\phi}^{\varepsilon_\iota} \leftarrow \tilde{\Upsilon}_{\phi}^{\varepsilon_\iota} $\;
				}
			}
		}
		$ \mathcal{O} \vartriangleleft \mathcal{O}_{\varsigma} $\;
	}
\end{algorithm}

Before the scheduling process, the slot expressions of flow attributes are given with predetermined slot length $ T $ satisfying conditions $ T \mid \gcd{(P^n)} $, $ \mathring{j}_i \geq 2 $ and $ \mathring{d}_i -h_i > 0 $. They participate in the parallel HFG scheduling method shown as Algo. 2, which distributes flow partitions to their respective conquering units in line 1. For every partition, the corresponding flows are scheduled incrementally to get their optimal injection offset in lines 2-29, which reduces device nodes involved in each scheduling to their own routes. In line 2, flows are first sorted to determine the scheduling order as $ \Theta_{\varsigma} $ \cite{9155434}. During each flow $ f_{\epsilon} $ scheduling, its feasible offset $ o $ is traversed below the bound $ \bar{o}_{\epsilon} $ given in line 4. It is the smaller of flow period $ p_{\epsilon} $ and latency constraint (4). The optimal offset $ o_{\epsilon} $ is filtered with the greedy policy preferring a lower transmission real-time rate and slot occupancy rate expressed as
\begin{equation}
	 \min_{o_{\epsilon}}{ (1-\rho) \cdot \frac{1}{n_{\epsilon}} \sum_{f_i}{\frac{\vartheta_i}{\mathring{d}_i}} + \rho \cdot \frac{\bar{\zeta}_{\epsilon}}{\Lambda} }.
\end{equation}
It is designed to match the global optimization goal (8), where $ n_{\epsilon} $ counts the flows that have been scheduled in the current partition and $ \bar{\zeta}_{\epsilon} $ is the global maximum slot occupancy among all slots $ \mathcal{S} $ of the whole network in $ f_{\epsilon} $ scheduling. Further, the constant part in (10) during offset searching of each flow $ f_{\epsilon} $ is removed to simplify the local goal as
\begin{equation}
	 \min_{o_{\epsilon}}{ (1-\rho) \cdot \frac{o_{\epsilon}}{n_{\epsilon} \cdot \mathring{d}_{\epsilon}} + \rho \cdot \frac{\bar{\zeta}_{\epsilon}}{\Lambda} }.
\end{equation}
This optimal value is symbolized as $ \Im_{\epsilon} $ and obtained in lines 5-23. Specifically, with increasing offset $ o $ in lines 5\&23, its corresponding optimization value $ \Im_o $ is given in lines 6-17 and filtered in lines 18-22. In this key process, the critical cause of scheduling complexity is obtaining maximum slot occupancy $ \bar{\zeta}_o $ under each offset and recurring offset searching. We first discuss the resolutions for the former as follows.

Since retrieving the slot occupancy slot-by-slot within the bound $ C $ is computationally complex, the complexity is quite sensitive to flow periods that determine the bound. It exponentially expands as the multiplicity of periods. With the built hyper-flow graph (3) on each link $ \varepsilon_\iota \in \mathcal{E} $, we prove the equivalence between hyper-flow cliques and flow confluences on any slot under the support of following Lemma \cite{9714183}.
\newtheorem{lemma}{Lemma}
\begin{lemma}
	For $ n $ hyper-flow sequences formed as $ \aleph(k_i) = q_i +k_i \cdot \mathring{p}_i $, if any sequence pair $ \aleph(k_{\alpha}) $ and $ \aleph(k_{\beta}) $ satisfies that $ \gcd(\mathring{p}_{\alpha}, \mathring{p}_{\beta}) | (q_{\alpha} -q_{\beta}) $, there exist $ z \in \mathbb{Z} $ satisfying $ \mathring{p}_i | (z -q_i)$, $ i \in \{1,2,...,n\} $, and vice versa.
	\label{1}
\end{lemma}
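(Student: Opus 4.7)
The plan is to recognize the statement as a restatement of the generalized Chinese Remainder Theorem (CRT) for non-coprime moduli, reformulated for forwarding-slot sequences. The existence of $z$ with $\mathring{p}_i \mid (z - q_i)$ for every $i$ is exactly the solvability of the simultaneous system $z \equiv q_i \pmod{\mathring{p}_i}$, so the lemma asserts the classical criterion that pairwise compatibility $\gcd(\mathring{p}_{\alpha}, \mathring{p}_{\beta}) \mid (q_{\alpha} - q_{\beta})$ is necessary and sufficient.

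First I would dispatch the easy (``vice versa'') direction. Given a common $z$, reducing $z \equiv q_{\alpha} \pmod{\mathring{p}_{\alpha}}$ and $z \equiv q_{\beta} \pmod{\mathring{p}_{\beta}}$ modulo $\gcd(\mathring{p}_{\alpha}, \mathring{p}_{\beta})$ immediately yields $q_{\alpha} \equiv q_{\beta} \pmod{\gcd(\mathring{p}_{\alpha}, \mathring{p}_{\beta})}$ for every pair, which is the pairwise divisibility condition stated in the lemma.

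For the forward direction I would proceed by induction on $n$. The base case $n = 2$ is the two-modulus CRT: under $d := \gcd(\mathring{p}_1, \mathring{p}_2) \mid (q_1 - q_2)$, Bezout's identity $d = a\mathring{p}_1 + b\mathring{p}_2$ supplies an explicit $z$ satisfying both congruences, and the solution set is an arithmetic progression with modulus $\lcm(\mathring{p}_1, \mathring{p}_2)$. For the inductive step, I would take $z'$ solving the first $n-1$ congruences modulo $M := \lcm(\mathring{p}_1, \dots, \mathring{p}_{n-1})$ and merge it with $z \equiv q_n \pmod{\mathring{p}_n}$ via the two-modulus case. This reduces the task to verifying $\gcd(M, \mathring{p}_n) \mid (z' - q_n)$.

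The hard part, and the only non-routine ingredient, will be the distributive lattice identity
\begin{equation}
\gcd\bigl(\lcm(\mathring{p}_1, \dots, \mathring{p}_{n-1}),\, \mathring{p}_n\bigr) = \lcm\bigl(\gcd(\mathring{p}_1, \mathring{p}_n), \dots, \gcd(\mathring{p}_{n-1}, \mathring{p}_n)\bigr), \nonumber
\end{equation}
which I would verify prime-by-prime using $v_p(\gcd) = \min$, $v_p(\lcm) = \max$, and the distributivity of $\min$ over $\max$ on the non-negative integers. Granted this identity, for each $i < n$ the pairwise hypothesis $\gcd(\mathring{p}_i, \mathring{p}_n) \mid (q_i - q_n)$ combined with $z' \equiv q_i \pmod{\mathring{p}_i}$ yields $z' \equiv q_n \pmod{\gcd(\mathring{p}_i, \mathring{p}_n)}$, so $z' - q_n$ is divisible by each $\gcd(\mathring{p}_i, \mathring{p}_n)$ and hence by their $\lcm$, which by the identity equals $\gcd(M, \mathring{p}_n)$. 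The two-modulus case then produces the desired $z$, closing the induction.
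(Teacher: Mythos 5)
Your proposal is correct, but note that this paper does not actually prove Lemma 1: it is imported wholesale from the authors' earlier flow-graph work \cite{9714183} (the citation is attached to the lemma statement), and the only related in-paper argument is Theorem 3, which takes Lemma 1 as given and adds the uniqueness of $z$ within $[0,\lcm(\mathring{P}^n))$. So there is no in-paper proof to match against, and your argument should be judged on its own: it is the standard generalized Chinese Remainder Theorem proof, and it is sound. The ``vice versa'' direction by reducing modulo $\gcd(\mathring{p}_{\alpha},\mathring{p}_{\beta})$ is immediate; the induction with the two-modulus Bezout construction works; and the lattice identity $\gcd(\lcm(\mathring{p}_1,\dots,\mathring{p}_{n-1}),\mathring{p}_n)=\lcm(\gcd(\mathring{p}_1,\mathring{p}_n),\dots,\gcd(\mathring{p}_{n-1},\mathring{p}_n))$ does follow prime-by-prime from distributivity of $\min$ over $\max$, which is exactly the non-routine ingredient you flagged. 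The one step you should make fully explicit when writing this up is that the solutions of the first $n-1$ congruences form a single residue class modulo $M=\lcm(\mathring{p}_1,\dots,\mathring{p}_{n-1})$ (if $z'$ and $z''$ both solve them, each $\mathring{p}_i$ divides $z'-z''$, hence $M$ does), since the inductive merge treats that system as the single congruence $z\equiv z'\pmod{M}$; you gesture at this with the ``arithmetic progression'' remark in the base case, but it is the hinge on which the reduction to the two-modulus case rests. Your approach also yields, essentially for free, the uniqueness modulo $\lcm(\mathring{P}^n)$ that the paper proves separately in Theorem 3, so it is if anything slightly stronger than what the lemma asks.
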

It means that each hyper-flow clique indicates the flow confluences on a certain slot $ s^{\varepsilon_\iota} $, and the flow confluences on every slot $ s^{\varepsilon_\iota} $ are carried by a hyper-flow clique. That is, the occupancy of any slot is given by the combined weight of its corresponding clique. Thus, the maximum slot occupancy $ \bar{\zeta}_o $ is equal to the maximum combined weight among all hyper-flow cliques. The obtaining of maximum slot occupancy $ \bar{\zeta}_o $ is converted to a maximal clique enumeration problem (MCEP) as shown in Fig. \ref{fig1_3}(a). Compared to violently retrieve, the hyper-flow graph based pattern maps/embeds the redundant slot occupancy information into less maximal flow cliques $ \Upsilon^{\varepsilon_\iota} $. Essentially, it merges the identical or contained flow confluences on different slots.

\begin{figure}[!t]
	\centering
	\subfloat[\label{fig:a}]{
		\includegraphics[width=\columnwidth]{./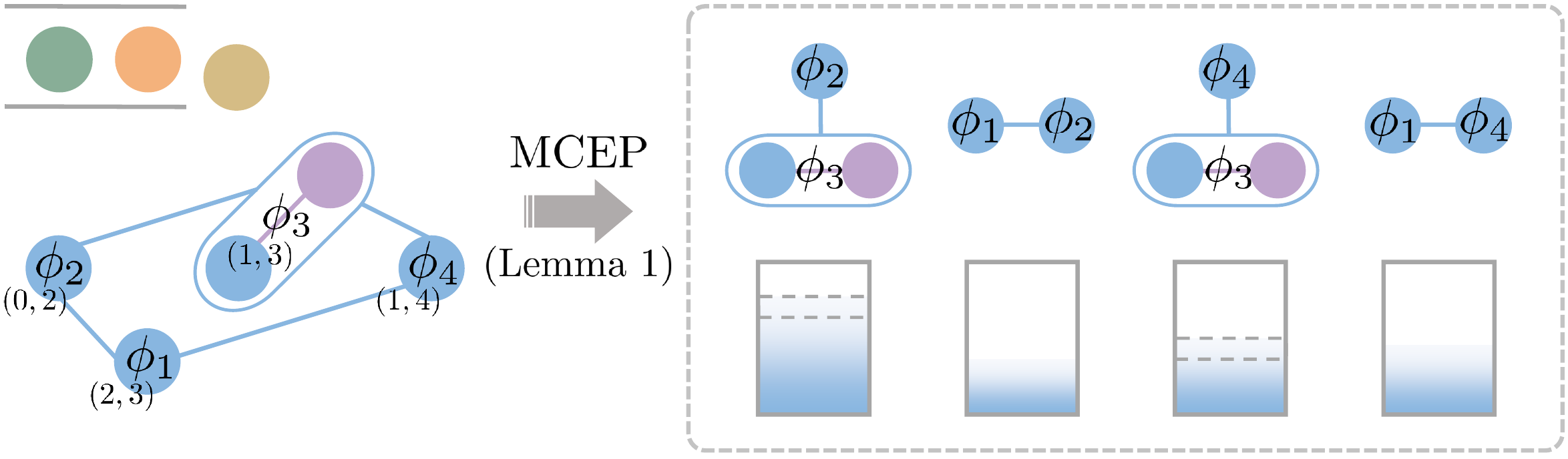}}
	
	\hspace{0.01mm}
	\subfloat[\label{fig:b}]{
		\includegraphics[width=\columnwidth]{./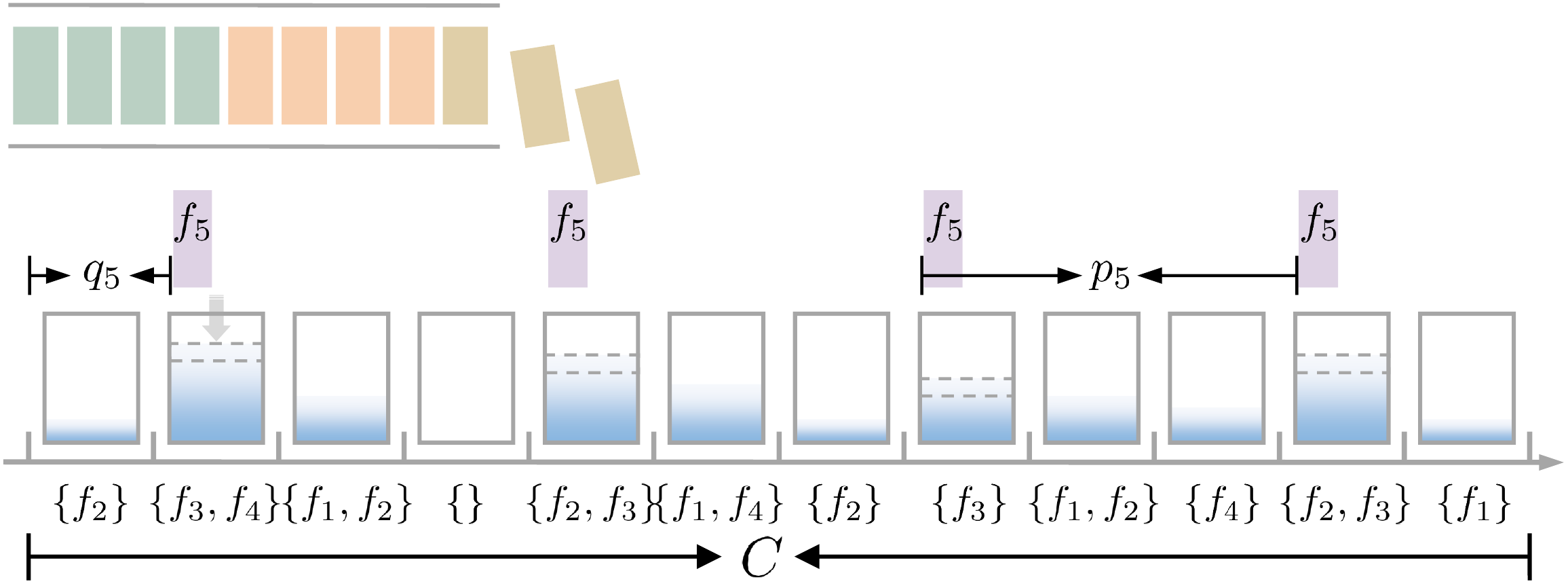}}
	
	\hspace{0.1mm}
	\subfloat[\label{fig:c}]{
		\includegraphics[width=\columnwidth]{./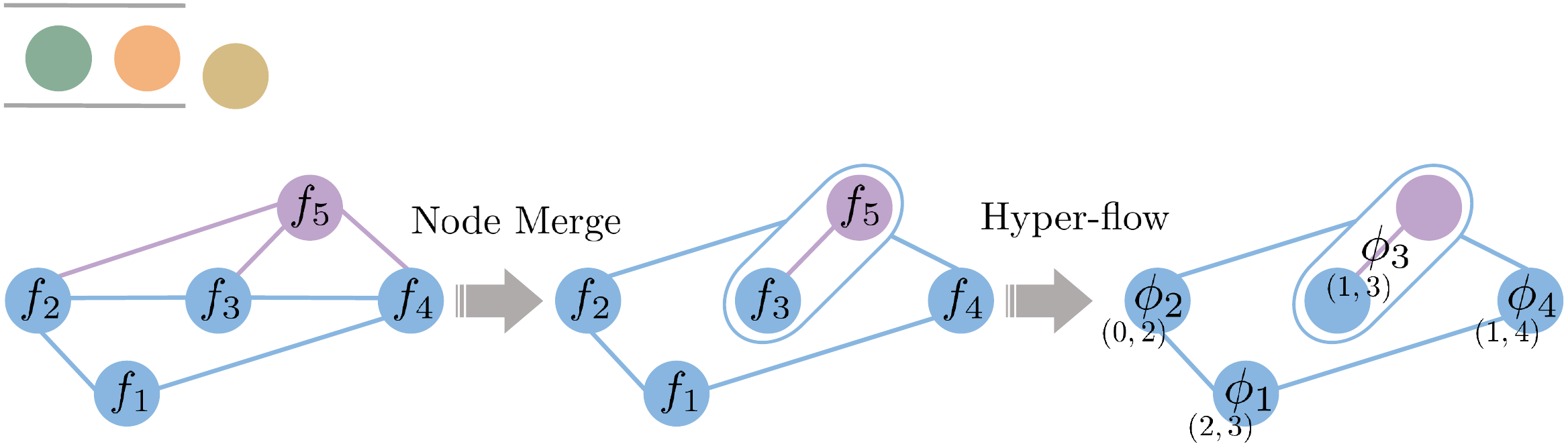}}
	
	\caption{(a) Hyper-flow graph based pattern, (b) Frame-based pattern, (c) Flow graph-based pattern}
	\label{fig1_3} 
\end{figure}

In this way, we design a dynamic graph and clique update method to obtain the maximum slot occupancy $ \bar{\zeta}_o $ under offset $ o $ of flow $ f_{\epsilon} $ for incremental scheduling. As shown in lines 6-16 of Algo. 2, for each link $ \varepsilon_\iota $ along route $ R_{\epsilon} $, the feature tuple of flow $ f_{\epsilon} $ is extracted in line 7 and checked its existence in the hyper-flow nodes $ \Phi_{\varsigma}^{\varepsilon_\iota} $. It covers hyper-flows that have been scheduled on link $ \varepsilon_\iota $. If it exists, the weight $ \ell_{\phi}^{\varepsilon_\iota} $ of corresponding $ \phi^{\varepsilon_\iota} $ increases by $ l_{\epsilon} $ (without real update in the pseudo-code during the offset search), and so do all combined weights $ \zeta_{\varpi}^{\varepsilon_\iota} $ of maximal cliques $ \varpi_{\kappa}^{\varepsilon_\iota} \in \Upsilon_{\phi}^{\varepsilon_\iota} $ that contains $ \phi^{\varepsilon_\iota} $ among the maximal hyper-flow clique set $ \Upsilon_{\varsigma}^{\varepsilon_\iota} $ in lines 9-10. They imply the potential maximum occupancy of the slots containing $ f_{\epsilon} $. If not, the tuple extends the graph as a new hyper-flow $ \phi^{\varepsilon_\iota} $. Its neighbors $ \varkappa_{\phi}^{\varepsilon_\iota} $ are given by (3) in line 12, where the neighbors mean its connected nodes by (3) in the graph. The potential maximum slot occupancy containing $ f_{\epsilon} $ is obtained based on the following theorem.
\newtheorem{theorem}{Theorem}
\begin{theorem}
In the graph with maximal cliques $ \varpi_{\kappa} \in \Upsilon $, if a new node $ \phi $ joins with the neighbors $ \varkappa_{\phi} $, the new maximal cliques $ \breve{\varpi}_{\kappa} \in \breve{\Upsilon} $ compared to $ \Upsilon $ are included in the clique set
\begin{equation}
	\tilde{\Upsilon}_{\phi} = \{ \tilde{\varpi}_{\kappa} \mid \tilde{\varpi}_{\kappa} = (\varpi_{\kappa} \cap \varkappa_{\phi}) \cup \{ \phi \},\ \varpi_{\kappa} \in \Upsilon_{\varkappa} \},
\end{equation}
where $ \Upsilon_{\varkappa} $ is $ \{ \varpi_{\kappa} \in \Upsilon \mid (\varpi_{\kappa} \cap \varkappa_{\phi}) \neq \varnothing \} $ as the subset of $ \Upsilon $. Specially, it is modified to set $ \{ \varnothing \} $ when empty. Furthermore, for any clique $ \tilde{\varpi}_{\kappa} \in \tilde{\Upsilon}_{\phi} - \breve{\Upsilon} $, there exits a new maximal clique $ \breve{\varpi}_{\kappa} \in \breve{\Upsilon} $ satisfying that $ \tilde{\varpi}_{\kappa} \subseteq \breve{\varpi}_{\kappa} $.
\label{1}
\end{theorem}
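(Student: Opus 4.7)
The plan is to prove the two assertions by analyzing how maximality is preserved when $\phi$ is added. First I would establish that every ``new'' maximal clique $\breve{\varpi}_\kappa$ (one in $\breve{\Upsilon}$ but not already in $\Upsilon$) must contain $\phi$: if $\phi \notin \breve{\varpi}_\kappa$, then $\breve{\varpi}_\kappa$ is a clique of the original graph, and so it is contained in some $\varpi_\kappa \in \Upsilon$; since $\varpi_\kappa$ survives unchanged in the extended graph, $\breve{\varpi}_\kappa$ fails to be maximal there unless $\breve{\varpi}_\kappa = \varpi_\kappa$, contradicting its newness.

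For the main containment $\breve{\Upsilon} \subseteq \tilde{\Upsilon}_\phi$, write $D := \breve{\varpi}_\kappa \setminus \{\phi\}$. Every element of $D$ is a neighbor of $\phi$ (else $\breve{\varpi}_\kappa$ would not be a clique), so $D \subseteq \varkappa_\phi$; moreover $D$ is a clique of the old graph, so it sits inside some old maximal clique $\varpi_\kappa \in \Upsilon$, giving $D \subseteq \varpi_\kappa \cap \varkappa_\phi$. The set $(\varpi_\kappa \cap \varkappa_\phi) \cup \{\phi\}$ is itself a clique in the new graph (its non-$\phi$ elements form a clique inside $\varpi_\kappa$, and each of them is adjacent to $\phi$ because it lies in $\varkappa_\phi$), and it contains $\breve{\varpi}_\kappa$, so by maximality of $\breve{\varpi}_\kappa$ the two sets coincide. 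When $D \neq \varnothing$ this also forces $\varpi_\kappa \cap \varkappa_\phi \neq \varnothing$, i.e. $\varpi_\kappa \in \Upsilon_\varkappa$; the degenerate case $\breve{\varpi}_\kappa = \{\phi\}$, which happens precisely when $\varkappa_\phi = \varnothing$, is absorbed by the convention $\Upsilon_\varkappa := \{\varnothing\}$ declared in the statement.

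The second assertion is shorter. Given $\tilde{\varpi}_\kappa \in \tilde{\Upsilon}_\phi \setminus \breve{\Upsilon}$, the set $\tilde{\varpi}_\kappa$ is a clique of the extended graph containing $\phi$, so it can be extended to some maximal clique of the extended graph, call it $\breve{\varpi}_\kappa^{\star}$. This extension still contains $\phi$, so by the first assertion it belongs to $\breve{\Upsilon}$, and it satisfies $\tilde{\varpi}_\kappa \subseteq \breve{\varpi}_\kappa^{\star}$, which is the required witness.

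The main obstacle I anticipate is the first containment: one has to be precise about why $(\varpi_\kappa \cap \varkappa_\phi) \cup \{\phi\}$ really is a clique in the extended graph, and about how the edge case $\varkappa_\phi = \varnothing$ threads through so that the construction returns $\{\phi\}$ rather than a vacuous object. A careful case split on whether $D$ is empty, together with the observation that $(\varpi_\kappa \cap \varkappa_\phi) \cup \{\phi\}$ depends on $\varpi_\kappa$ only through its intersection with $\varkappa_\phi$, should close these loose ends; beyond that bookkeeping the argument is essentially a standard maximal-clique extension lemma specialized to a single newly added vertex.
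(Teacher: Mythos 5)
Your proposal is correct and follows essentially the same route as the paper's own proof: both show that any new maximal clique must contain $\phi$, confine its remaining vertices to $\varpi_{\kappa} \cap \varkappa_{\phi}$ for some old maximal clique $\varpi_{\kappa}$, invoke maximality of $\breve{\varpi}_{\kappa}$ to force equality with $(\varpi_{\kappa} \cap \varkappa_{\phi}) \cup \{\phi\}$ (handling $\varkappa_{\phi} = \varnothing$ via the $\{\varnothing\}$ convention), and prove the second claim by extending $\tilde{\varpi}_{\kappa}$ to a maximal clique of the enlarged graph, which contains $\phi$ and hence cannot lie in $\Upsilon$, so it lies in $\breve{\Upsilon}$. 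Your write-up is, if anything, slightly more careful than the paper's in isolating the degenerate case, and it has no gaps.
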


\begin{proof}
It is deduced and verified step by step as follows.

Considering that $ (\varpi_{\kappa} \cap \varkappa_{\phi}) \subseteq \varkappa_{\phi} $, any element $ \phi_j \in (\varpi_{\kappa} \cap \varkappa_{\phi}) $ is connected with $ \phi $. Moreover, since the subset $ (\varpi_{\kappa} \cap \varkappa_{\phi}) $ of clique $ \varpi_{\kappa} \in \Upsilon $ must be a clique, we prove that $ (\varpi_{\kappa} \cap \varkappa_{\phi}) \cup \{ \phi \} $ is a clique and $ \tilde{\Upsilon}_{\phi} $ is the clique set.

For any new maximal clique $ \breve{\varpi}_{\kappa} \in \breve{\Upsilon} $, it is easy to deduced that $ \phi \in \breve{\varpi}_{\kappa} $. Otherwise, $ \breve{\varpi}_{\kappa} $ is constituted by the original nodes of the graph and $ \breve{\varpi}_{\kappa} \in \Upsilon $, which violates the definition of $ \breve{\varpi}_{\kappa} $. Considering the full connection feature between nodes of the clique, the remaining elements of $ \breve{\varpi}_{\kappa} $ except $ \phi $ only exist in its neighbors, that is, $ \breve{\varpi}_{\kappa} \subseteq (\varkappa_{\phi} \cup \{ \phi \}) $. In addition, the fully connected remaining elements must be included in a certain maximal clique $ \varpi_{\kappa} \in \Upsilon $, that is, $ \breve{\varpi}_{\kappa} \subseteq (\varpi_{\kappa} \cup \{ \phi \}) $. Hence, there exists $ \varpi_{\kappa} \in \Upsilon $ satisfying that $ \breve{\varpi}_{\kappa} \subseteq (\varpi_{\kappa} \cup \{ \phi \}) \cap (\varkappa_{\phi} \cup \{ \phi \}) $, that is, $ \breve{\varpi}_{\kappa} \subseteq (\varpi_{\kappa} \cap \varkappa_{\phi}) \cup \{ \phi \} $. At this time, the maximal clique $ \varpi_{\kappa} \in \Upsilon_{\varkappa} $ also exists if the above existing $ \varpi_{\kappa} $ satisfies the condition $ (\varpi_{\kappa} \cap \varkappa_{\phi}) \neq \varnothing $. Otherwise, all these existing $ \varpi_{\kappa} $ satisfies that $ (\varpi_{\kappa} \cap \varkappa_{\phi}) = \varnothing $ and $ \breve{\varpi}_{\kappa} \subseteq \{ \phi \} $, which means the neighbors $ \varkappa_{\phi} $ are the empty set $ \varnothing $. Thus, $ \Upsilon_{\varkappa} $ is the set $ \{ \varnothing \} $. There exists $ \varpi_{\kappa} = \varnothing $ satisfying that $ \breve{\varpi}_{\kappa} \subseteq (\varpi_{\kappa} \cap \varkappa_{\phi}) \cup \{ \phi \} $. Since the maximal clique cannot be the proper subset of any other clique, for any $ \breve{\varpi}_{\kappa} \in \breve{\Upsilon} $, there exists $ \varpi_{\kappa} \in \Upsilon_{\varkappa} $ satisfying the condition $ \breve{\varpi}_{\kappa} = (\varpi_{\kappa} \cap \varkappa_{\phi}) \cup \{ \phi \} $. Hence, $ \breve{\Upsilon} \subseteq \tilde{\Upsilon}_{\phi} $ holds.

Similarly, for any clique $ \tilde{\varpi}_{\kappa} \in \tilde{\Upsilon}_{\phi} - \breve{\Upsilon} $, since node $ \phi \in \tilde{\varpi}_{\kappa} $, we have that $ \phi \notin \varpi_{\kappa} $, $ \forall \varpi_{\kappa} \in \Upsilon $. Thus, there must exists the new maximal clique $ \breve{\varpi}_{\kappa} \in \breve{\Upsilon} $ satisfying the condition $ \tilde{\varpi}_{\kappa} \subseteq \breve{\varpi}_{\kappa} $ . Otherwise, $ \tilde{\varpi}_{\kappa} $ is verified as a new maximal clique, which conflicts with $ \tilde{\varpi}_{\kappa} \in \tilde{\Upsilon}_{\phi} - \breve{\Upsilon} $.
\end{proof}

According to this theorem, the maximum occupancy of the slots containing $ f_{\epsilon} $ is implied in the combined weights $ \zeta_{\varpi}^{\varepsilon_\iota} $ of cliques $ \tilde{\varpi}_{\kappa}^{\varepsilon_\iota} \in \tilde{\Upsilon}_{\phi}^{\varepsilon_\iota} $ given by (12). These values are obtained in lines 13-15. Based on this, the maximum slot occupancy $ \bar{\zeta}_o $ under each offset is measured by comparing $ \max{\{ \zeta_{\varpi}^{\varepsilon_\iota} \}} $ and the previous maximum occupancy $ \bar{\zeta} $ before flow $ f_{\epsilon} $ scheduling in line 16. Then, the optimization value $ \Im_o $ for offset $ o $ during flow $ f_{\epsilon} $ scheduling is given by (11) in line 17, which works as the filtering basis for the optimal value $ \Im_{\epsilon} $ and offset $ o_{\epsilon} $ in lines 18-20. Under each offset $ o $, the computational complexity to get $ \bar{\zeta}_o $  is less than $ \min{\{ (m-\rho) \cdot 3^{\rho/3}, \lcm(\mathring{P}^n) / \mathring{p}_{\epsilon} \}} $ on each link, where $ m $ is the neighbor number of flow $ f_{\epsilon} $ in the flow graph. $ \rho $ are the degeneracy of the neighbor built sub-graphs.

Compared to the above hyper-flow graph based pattern, the traditional frame-based scheduling \cite{9155434,9407828,9893358,10101832} builds a slot occupancy list covering all slots $ \mathcal{S} $ of the whole network as shown in Fig. \ref{fig1_3}(b). Due to the uncertainty of which slots have current maximum occupancy, it puts all periodic frames of flow $ f_{\epsilon} $ into their specific slots $ s^{\varepsilon_\iota} \in \mathcal{S} $ within the bound $ C $ along every link $ \varepsilon_\iota \in R_{\epsilon} $ during each offset search. The occupancy $ \zeta_s^{\varepsilon_\iota} $ of these slots is measured one-by-one. This flow period-sensitive pattern owns the computational complexity as constant $ \lcm(\mathring{P}^n) / \mathring{p}_{\epsilon} $ on each link. For the large-scale complex flows, it almost exponentially expands and severely impacts scheduling efficiency.

In our previous work \cite{9714183}, a flow graph-based scheduling pattern is explored to obtain the key scheduling information, that is, maximum slot occupancy $ \bar{\zeta}_o $. Distinguished from the hyper-flow nodes, this graph is built by treating each flow as a node and connects edges with a similar matrix $ \Psi^{\varepsilon_\iota} $
\begin{equation}
	\Psi_{f_{\alpha}, f_{\beta}}^{\varepsilon_\iota} =
	\begin{cases}
		1 & \text{if\ } \gcd(\mathring{p}_{\alpha}, \mathring{p}_{\beta}) | (q_{\alpha}^{\varepsilon_\iota} -q_{\beta}^{\varepsilon_\iota}). \\
		0 & \text{otherwise.}
	\end{cases}
\end{equation}
As shown in Fig. \ref{fig1_3}(c), it is easy to prove that the hyper-flow graph can be shrunk from the flow graph with the following node merging principles.
\begin{enumerate}
	\item The merged flow nodes are interconnected in pairs with edges by (13);
	\item The merged flow nodes have the same neighbors $ \varkappa_{\phi} $ except for each other;
	\item After merging, the new node inherits the weight as the sum of the merged flow nodes.
\end{enumerate}
Thus, these two patterns share the same number and combined weights of maximal cliques. Their computational complexity is similar, but the hyper-flow graph pattern has smaller $ m $ and $ \rho $. Since the complexity of MCEP rises exponentially along with the node scale \cite{2543630}, the smaller hyper-flow graph facilitates the large-scale flow scheduling. It also curbs the exponential trend of MECP due to its slower growth. To sum up, the hyper-flow graph based scheduling pattern suppresses the sensitivity of scheduling complexity to flow attributes, while weakening the impact of flow scale on it.

Since the recurring offset searching accumulates the above complexity, we further improve offset filtering efficiency with an early-break strategy. It provides a breaking condition for offset traversal before reaching bound $ \bar{o}_{\epsilon} $. Specifically, the optimization values under offsets behind current $ o $ are compared with current optimal value $ \Im_{\epsilon} $ in lines 21-22. Compared to this optimal $ \Im_{\epsilon} $, if all values $ \Im_{\tilde{o}} $ under offset $ \tilde{o} > o $ satisfy that
\begin{equation}
	 \Im_{\epsilon} \leq \inf_{\tilde{o} > o}{\Im_{\tilde{o}}} = \inf_{\tilde{o} > o}{ \{ (1-\rho) \cdot \dfrac{\tilde{o}}{n_{\epsilon} \cdot \mathring{d}_{\epsilon}} + \rho \cdot \frac{\bar{\zeta}_{\tilde{o}}}{\Lambda} \} }.
\end{equation}
It means that the current value $ \Im_{\epsilon} $ and its offset $ o_{\epsilon} $ is optimal for flow $ f_{\epsilon} $ scheduling. Thus, we break the offset search from its traversal. Based on this, we deflate (14) as
\begin{equation}
	 \Im_{\epsilon} \leq \inf_{\tilde{o} > o}{ \{ (1-\rho) \cdot \dfrac{\tilde{o}}{n_{\epsilon} \cdot \mathring{d}_{\epsilon}} \} } + \inf_{\tilde{o} > o}{ \{  \rho \cdot \frac{\bar{\zeta}_{\tilde{o}}}{\Lambda} \} }.
\end{equation}
It is easy to prove that (14) must hold if the condition (15) is satisfied. Further, we have that
\begin{equation}
	\begin{aligned}
	& \inf_{\tilde{o} > o}{ \{ (1-\rho) \cdot \dfrac{\tilde{o}}{n_{\epsilon} \cdot \mathring{d}_{\epsilon}} \} } = (1-\rho) \cdot \dfrac{o+1}{n_{\epsilon} \cdot \mathring{d}_{\epsilon}}, \\
	& \inf_{\tilde{o} > o}{ \{  \rho \cdot \frac{\bar{\zeta}_{\tilde{o}}}{\Lambda} \} } = \rho \cdot \frac{\max{ \{ \bar{\zeta}, l_{\epsilon} \} }}{\Lambda},
	\end{aligned}
	\nonumber
\end{equation}
which builds the early-break threshold under offset $ o $ as
\begin{equation}
	 \check{\Im}_o = (1-\rho) \cdot \dfrac{o+1}{n_{\epsilon} \cdot \mathring{d}_{\epsilon}} + \rho \cdot \frac{\max{ \{ \bar{\zeta}, l_{\epsilon} \} }}{\Lambda}.
\end{equation}
If the condition $ \Im_{\epsilon} \leq \check{\Im}_o $ is satisfied, the offset traversal ends immediately in line 22. Unusually, when weight factor $ \rho $ is 1, the early-break condition is converted into $ \bar{\zeta}_{\epsilon} \leq \max{ \{ \bar{\zeta}, l_{\epsilon} \} } $ introduced in work \cite{9407828}. 

After the flow $ f_{\epsilon} $ scheduling by offset search, the optimal offset $ o_{\epsilon} $ is stored and the global maximum slot occupancy $ \bar{\zeta} $ is updated in line 24, where the symbol $ \vartriangleleft $ represents the store operation. Then, the hyper-flow graph and its maximal cliques on each link $ \varepsilon_\iota \in R_{\epsilon} $ are updated in lines 25-29. Specifically, if there exists the hyper-flow $ \phi^{\varepsilon_\iota} $ with the same feature tuple as optimal offset $ o_{\epsilon} $, this hyper-flow adds frame length $ l_{\epsilon} $ to its weight $ \ell_{\phi}^{\varepsilon_\iota} $ in line 27. Otherwise, the tuple is extended as a new hyper-flow with weight $ l_{\epsilon} $, and stored in line 29 with its corresponding potential maximal cliques $ \tilde{\Upsilon}_{\phi}^{\varepsilon_\iota} $. Due to Theorem 1, they need to be reduced to maximal hyper-flow cliques $ \breve{\Upsilon}_{\phi}^{\varepsilon_\iota} $ and then stored. In the incremental way, every flow partition is scheduled and their flow injection offsets $ \mathcal{O}_{\varsigma} $ are aggregated as a preliminary solution $ \mathcal{O} $ in line 30.

\subsection{Link-parallel Flow Synthesis Phase}
Due to the network sharing of sub-problems in the above partition scheduling, their common slot occupancy affects load balancing and causes potential overflow. Hence, the global slot occupancy information, that is, maximal hyper-flow cliques, needs to be synthesized from these sub-problems. The complete hyper-flow graphs are built link-by-link that accumulates the device scale induced scheduling complexity. To deal with this, we adopt a link-parallel structure that spreads the hyper-flow graph synthesis for each link $ \varepsilon_\iota \in \mathcal{E} $ over multiple computing units as shown in Fig. \ref{linkparallel}.

\begin{figure}[!h]
	\centerline{\includegraphics[width=6cm]{./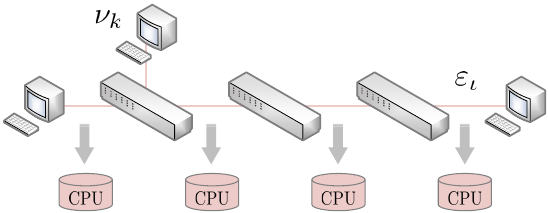}}
	\caption{Link-parallel synthesis structure}
	\label{linkparallel}
\end{figure}

In the second phase, each scheduling sub-problem identifies its hyper-flows on each link. Hence, the graph and maximal cliques synthesis could be separated across links. The pseudo-code is shown as Algo. 3. Available computing units are first allocated to every link containing target flows in line 1. For any link $ \varepsilon_\iota \in \mathcal{E} $, the hyper-flows $ \Phi_{\varsigma}^{\varepsilon_\iota} $ and their corresponding weights $ \{ \ell_{\phi}^{\varepsilon_\iota} \}_{\varsigma}^{\varepsilon_\iota} $ from sub-problems are synthesized by (3) in lines 2-3. It forms a whole hyper-flow graph with nodes $ \Phi^{\varepsilon_\iota} $ and weights $ \mathcal{L}^{\varepsilon_\iota} $, where the same nodes are merged and new edges are connected. Meanwhile, the neighbors are accumulated as $ \mathcal{N}^{\varepsilon_\iota} $ for each hyper-flow in the synthesized graph in line 4. Based on this, line 5 adopts a Bron-Kerbosch Degeneracy method \cite{2543629} to enumerate the global maximal cliques $ \Upsilon^{\varepsilon_\iota} $ on each link. Also, their combined weights are counted and aggregated in lines 6-8. After this, the global slot occupancy information is given in line 8 and stored as $ \Upsilon $.

\begin{algorithm}[!h]
	\caption{Parallel Bron-Kerbosch Method}
	\KwIn{Hyper-flows: $ \{ \Phi_{\varsigma}^{\varepsilon_\iota} \}_{\varsigma} \in \Phi $ \\
		Hyper-flow weights: $ \{ \ell_{\phi}^{\varepsilon_\iota} \}_{\varsigma} \in \mathcal{L} $
	}
	\KwOut{Global maximal cliques: $ \{ \varpi_{\kappa}^{\varepsilon_\iota} \}^{\varepsilon_\iota} \in \Upsilon $
	}
	\for{$ \varepsilon_\iota $ in $ \mathcal{E} $}
	{
		$ \Phi^{\varepsilon_\iota} \leftarrow $ merge the same $ \phi^{\varepsilon_\iota} $ from all partitions\;
		$ \mathcal{L}^{\varepsilon_\iota} \leftarrow $ sum up weights $ \ell_{\phi}^{\varepsilon_\iota} $ for the same $ \phi^{\varepsilon_\iota} $\;
		$ \mathcal{N}^{\varepsilon_\iota} \leftarrow $ get neighbors $ \varkappa_{\phi}^{\varepsilon_\iota} $ for each $ \phi^{\varepsilon_\iota} \in \Phi^{\varepsilon_\iota} $\;
		$ \Upsilon^{\varepsilon_\iota} \leftarrow $ Bron-Kerbosch Degeneracy($ \Phi^{\varepsilon_\iota}, \mathcal{N}^{\varepsilon_\iota} $)\;
		\For{$ \varpi_{\kappa}^{\varepsilon_\iota} $ in $ \Upsilon^{\varepsilon_\iota} $}
		{
			$ \zeta_{\varpi}^{\varepsilon_\iota} \leftarrow \sum_{\phi \in \varpi_{\kappa}^{\varepsilon_\iota}} \ell_{\phi}^{\varepsilon_\iota} $\;
		}
		$ \Upsilon \vartriangleleft \Upsilon^{\varepsilon_\iota} $\;
	}
\end{algorithm}

\subsection{Precise Flow Fine-tuning Phase}
In this phase, global hyper-flow graphs and their maximal cliques are involved in overflow avoidance to improve schedulability and load-balancing. The potential overflow slots are desired to precisely fine-tune the flows inside them, while the optimized QoS performances are maintained. To handle this, the conflict clique is first defined and proved its superiority for flow fine-tuning. It is an extension of the above hyper-flow graph based methodology and drives the design of an efficient CCR re-scheduling method shown in Fig. \ref{CCR}, including the following overflow checking, conflict clique detecting, sub-flow backtracking and sub-flow re-scheduling.

\begin{figure*}[!t]
	\centerline{\includegraphics[width=18.2cm]{./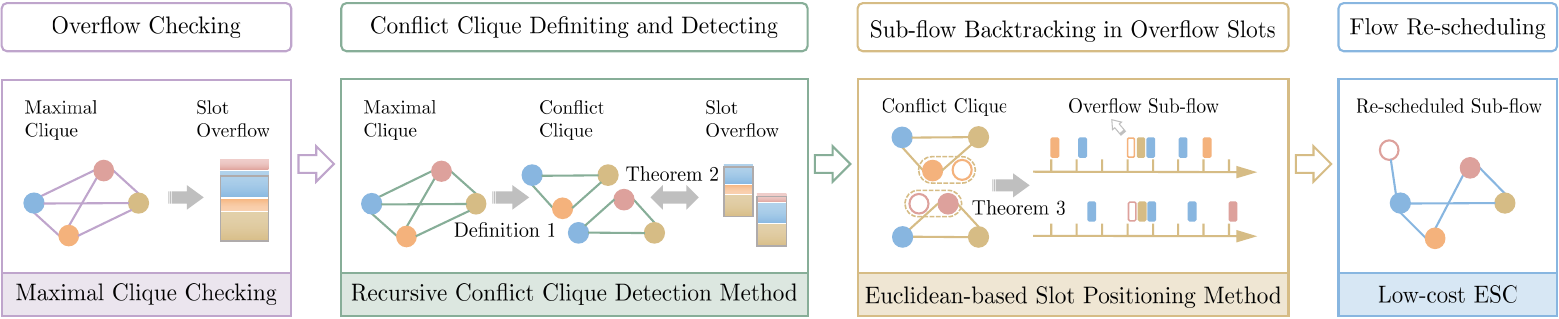}} 
	\caption{Processes of CCR re-scheduling method}
	\label{CCR}
\end{figure*}

Limited by available slot capacity $ \Lambda $ in constraint (7), the global slot occupancy needs to be checked for overflow. According to the maximal hyper-flow cliques from the synthesis phase, we give the overflow maximal cliques $ \hat{\varpi}_{\kappa}^{\varepsilon_\iota} \in \hat{\Upsilon}^{\varepsilon_\iota} $ on each link $ \varepsilon_\iota \in \mathcal{E} $ as
\begin{equation}
	 \hat{\Upsilon}^{\varepsilon_\iota} = \{ \varpi_{\kappa}^{\varepsilon_\iota} \mid \varpi_{\kappa}^{\varepsilon_\iota} \in \Upsilon^{\varepsilon_\iota} \text{ and } \zeta_{\varpi}^{\varepsilon_\iota} > \Lambda \},
\end{equation}
which are aggregated as $ \hat{\Upsilon} $. With further analysis of Lemma 1, we deduce that the flow confluences on overflow slots $ \hat{\mathcal{S}} $ must be carried by the sub-cliques in these cliques $ \hat{\Upsilon} $. To avoid slot overflow, a naive way is to backtrack and re-schedule the flows $ \dot{f}_i \in \dot{\mathcal{F}} $ satisfying that
\begin{equation}
	 \forall \hat{\varpi}_{\kappa}^{\varepsilon_\iota} \in \hat{\Upsilon},\ \text{the combined weight of } \{ \hat{\varpi}_{\kappa}^{\varepsilon_\iota} - \dot{\mathcal{F}} \} \leq \Lambda.
	 \nonumber
\end{equation}
With different flow offsets re-filtered by Algo. 2, the hyper-flow graphs are modified to eliminate the overflow maximal cliques $ \hat{\Upsilon} $, that is, the overflow slots. Since these flows occupy not only overflow slots but also others, the whole flow re-scheduling impacts current slot allocation and QoS performances more. To focus on overflow avoidance while maintaining the QoS, we aim to precisely position the overflow slots $ \hat{s}^{\varepsilon_\iota} \in \hat{\mathcal{S}} $ and fine-tune the flow portions inside them. It breaks the assumption that all frames of each flow share the same delay and makes the preliminary solution be rectified as slightly as possible. Between the overflow slots $ \hat{\mathcal{S}} $, flow portions and given overflow maximal cliques $ \hat{\Upsilon} $, we introduce the conflict clique $ \dot{\varpi}_{\kappa}^{\varepsilon_\iota} $ to build their correlation. This concept is first given by us and defined as
\newtheorem{definition}{Definition}
\begin{definition}
In a hyper-flow graph with weighted nodes, the conflict clique $ \dot{\varpi}_{\kappa}^{\varepsilon_\iota} $ is a clique with the features as
\begin{enumerate}
	\item Its combined weight exceeds slot capacity $ \Lambda $;
	\item The weight of its any proper sub-clique is below $ \Lambda $.
\end{enumerate}
\label{1}
\end{definition}
Under such conditions, these conflict cliques $ \dot{\Upsilon} $ have the advance in re-scheduling shown as the following theorem.
\begin{theorem}
For all conflict cliques $ \dot{\varpi}_{\kappa}^{\varepsilon_\iota} \in \dot{\Upsilon} $ under the above definition, they satisfy that
\begin{enumerate}
	\item The confluence slots $ \dot{s}^{\varepsilon_\iota} \in \dot{\mathcal{S}} $ of flows in conflict cliques are equivalent to overflow slots $ \hat{\mathcal{S}} $;
	\item The slot overflow would be avoided by backtracking and re-scheduling the flow portions $\dot{\mathcal{F}} $ given as 
	\begin{equation}
	\{ f_i^{\dot{\varpi}} \mid \text{ flows } \{ f_i \} \text{ of any one node in } \dot{\varpi}_{\kappa}^{\varepsilon_\iota},\ \dot{\varpi}_{\kappa}^{\varepsilon_\iota} \in \dot{\Upsilon} \},
	 \nonumber
	\end{equation}
	where $ f_i^{\dot{\varpi}} $ represents the flow $ f_i $ portion that occupies the confluence slots $ \dot{s}^{\varepsilon_\iota} $ of clique $ \dot{\varpi}_{\kappa}^{\varepsilon_\iota} $.
\end{enumerate}
\label{2}
\end{theorem}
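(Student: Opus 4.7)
The plan is to prove the two parts of Theorem~2 in order, with Lemma~1 serving as the bridge between cliques in the hyper-flow graph and confluence slots, and Definition~1 supplying the minimality that drives both arguments.

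For part 1, I would establish the set equality $\dot{\mathcal{S}} = \hat{\mathcal{S}}$ by two inclusions. The forward inclusion is nearly immediate: by Definition~1 any conflict clique has combined weight strictly greater than $\Lambda$, and by the forward direction of Lemma~1 its member hyper-flows all confluence on some common slot, whose occupancy therefore exceeds $\Lambda$; hence every confluence slot of a conflict clique is an overflow slot. The reverse inclusion starts from an overflow slot $s \in \hat{\mathcal{S}}$ and uses the converse of Lemma~1 to show that the flow set $F_s$ occupying $s$ forms a clique in the hyper-flow graph with combined weight above $\Lambda$; then by repeatedly deleting nodes while the remaining weight still exceeds $\Lambda$, one extracts a minimal sub-clique $C \subseteq F_s$ satisfying both features of Definition~1, i.e., a conflict clique whose confluence slots include~$s$.

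For part 2, I would analyze, for each overflow slot $s$, the residual flow set $F_s'$ left on $s$ after the backtracking of $\dot{\mathcal{F}}$. The key observation is that $F_s'$ is still a sub-clique of $F_s$, yet it can no longer contain any conflict clique as a sub-clique: for every conflict clique $C \subseteq F_s$, the construction of $\dot{\mathcal{F}}$ removes some node of $C$ from every one of $C$'s confluence slots, and $s$ is one of those slots because $C \subseteq F_s$. Closing by contradiction, if the combined weight of $F_s'$ still exceeded $\Lambda$ the same minimality extraction as in part~1 would produce a conflict sub-clique inside $F_s'$, contradicting what was just argued; hence $s$'s new occupancy is at most $\Lambda$, and the subsequent re-scheduling of $\dot{\mathcal{F}}$ by Algorithm~2 then handles reassignment of the backtracked flow portions.

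The hardest step, and the one I would write most carefully, is the minimality extraction argument shared by both parts: the claim that every clique in the hyper-flow graph whose combined weight exceeds $\Lambda$ contains a conflict sub-clique. It is only a short greedy argument (remove nodes one at a time while the combined weight remains above $\Lambda$ and stop when any further removal would drop it below), but it is precisely the step that connects Definition~1 to arbitrary over-capacity cliques, and it is what makes both the reverse inclusion of part~1 and the contradiction step of part~2 go through. Once that fact is in hand, no further machinery beyond Lemma~1 and the construction of $\dot{\mathcal{F}}$ is needed.
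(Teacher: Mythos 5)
Your proposal is correct and follows essentially the same route as the paper's proof: part 1 is argued in both directions via Definition~1 and Lemma~1 together with the existence of a minimal over-capacity sub-clique, and part 2 closes by contradiction through part 1, exactly as the paper does. The only difference is one of detail rather than approach --- you spell out the greedy node-removal extraction showing that any clique with combined weight above $\Lambda$ contains a conflict sub-clique, a step the paper merely asserts (``there must exist a sub-clique satisfying all features of conflict cliques''), and you make the per-slot residual-occupancy bookkeeping in part 2 explicit.
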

\begin{proof}
The propositions are verified step by step as follows.
	
For any conflict clique $ \dot{\varpi}_{\kappa}^{\varepsilon_\iota} \in \dot{\Upsilon} $, it is easy to get that the corresponding flow confluence slots overflow due to feature 1). Conversely, for any overflow slot $ \hat{s}^{\varepsilon_\iota} \in \hat{\mathcal{S}} $ and its corresponding flow confluences as the clique $ \varpi $ by Lemma 1, there must exist a sub-clique satisfying all features of conflict cliques. Moreover, the flow confluence slots of this sub-clique must cover the confluence slots of $ \varpi $, that is, the overflow slot $ \hat{s}^{\varepsilon_\iota} $. Thus, proposition 1) holds.

During the backtracking for re-scheduling following proposition 2), each conflict clique $ \dot{\varpi}_{\kappa}^{\varepsilon_\iota} \in \dot{\Upsilon} $ takes out one node with the flows set $ \{ f_i \} $. It means that their flow portions $ f_i^{\dot{\varpi}} $ in the confluence slots $ \dot{s}^{\varepsilon_\iota} $ of conflict clique $ \dot{\varpi}_{\kappa}^{\varepsilon_\iota} $ would be re-scheduled outside these slots. In this way, there is no conflict clique satisfying the features 1)\&2) among the network. For any overflow slot, its flow confluences would be below the slot capacity. Otherwise, the overflow flow confluences imply the existence of conflict cliques by proposition 1), which violates the above inferences. Thus, proposition 2) holds.
\end{proof}

Based on this theorem, we design the following methods to precisely position overflow slots and fine-tune their flow portions, including recursive conflict clique detection, Euclidean-based slot positioning and low-cost sub-flow re-scheduling.

\subsubsection{Recursive Conflict Clique Detection}
For the Theorem 2-based re-scheduling, all conflict cliques $ \dot{\Upsilon} $ need to be detected from the overflow maximal cliques $ \hat{\Upsilon} $. The detection pseudo-code is shown as Algo. 4, where a specific recursive process is illustrated independently. In lines 1-3, each overflow maximal clique $ \hat{\varpi}_{\kappa}^{\varepsilon_\iota} \in \hat{\Upsilon}^{\varepsilon_\iota} $ on every link $ {\varepsilon_\iota} \in \mathcal{E} $ is sorted its hyper-flows in ascending order of their weights. The ordered queue $ \Theta_{\kappa}^{\varepsilon_\iota} $ works for the ConflictCliqueDetection process along with the combined weight $ \hat{\zeta}_{\kappa}^{\varepsilon_\iota} $ of $ \hat{\varpi}_{\kappa}^{\varepsilon_\iota} $, the slot capacity $ \Lambda $ and temporary variable $ \omega $ initialized to $ \varnothing $.
\begin{equation}
	\begin{aligned}
		& \textbf{proc } \text{ConflictCliqueDetection} (\theta, \zeta, \Lambda, \omega) \\
		& \ \text{1:} \ \dot{\varpi} \leftarrow \omega \cup \theta \text{ and get the first element } \phi \in \dot{\varpi} \text{;} \\
		& \ \text{2:} \ \textbf{if } \zeta -\ell_{\phi} > \Lambda \ \textbf{then} \\
		& \ \text{3:} \ \quad \text{report } \dot{\varpi} \text{ as a conflict clique and }\textbf{return;}\\
		& \ \text{4:} \ \textbf{for } \phi_i \text{ in } \theta \textbf{ do} \\
		& \ \text{5:} \ \quad \textbf{if } \zeta -\ell_{\phi_i} > \Lambda \ \textbf{then} \\
		& \ \text{6:} \ \quad \quad \tilde{\theta} \leftarrow \{ \phi_{i+1}, \phi_{i+2}, \cdots \in \theta \} \text{ and } \tilde{\omega} \leftarrow \{ \cdots, \phi_i \in \theta \} \\
		& \ \text{7:} \ \quad \quad \text{ConflictCliqueDetection} (\tilde{\theta}, \zeta-\ell_{\phi_i}, \Lambda, \{ \omega, \tilde{\omega} \})
	\end{aligned}
	\nonumber
\end{equation}

\begin{algorithm}[!t]
	\caption{Recursive Conflict Clique Detection}
	\KwIn{Overflow maximal cliques: $ \{ \hat{\varpi}_{\kappa}^{\varepsilon_\iota} \}^{\varepsilon_\iota} \in \hat{\Upsilon} $ \\
		Available slot capacity: $ \Lambda $
	}
	\KwOut{Conflict cliques: $ \{ \dot{\varpi}_{\kappa}^{\varepsilon_\iota} \}^{\varepsilon_\iota} \in \dot{\Upsilon} $
	}
	\For{$ \varepsilon_\iota $ in $ \mathcal{E} $}
	{
		\For{$ \hat{\varpi}_{\kappa}^{\varepsilon_\iota} $ in $ \hat{\Upsilon}^{\varepsilon_\iota} $}
		{
			$ \Theta_{\kappa}^{\varepsilon_\iota} \leftarrow \{ \phi_1^{\varepsilon_\iota}, \phi_2^{\varepsilon_\iota}, \cdots \in \hat{\varpi}_{\kappa}^{\varepsilon_\iota} \ | \ \ell_{\phi_1}^{\varepsilon_\iota} < \ell_{\phi_2}^{\varepsilon_\iota} < \cdots \} $\;
			$ \dot{\Upsilon}^{\varepsilon_\iota} \vartriangleleft $ ConflictCliqueDetection($ \Theta_{\kappa}^{\varepsilon_\iota}, \Lambda, \hat{\zeta}_{\kappa}^{\varepsilon_\iota}, \varnothing $)\;
		}
		$ \dot{\Upsilon} \vartriangleleft \dot{\Upsilon}^{\varepsilon_\iota} $\;
	}
\end{algorithm}

\begin{figure}[!h]
	\centerline{\includegraphics[width=\columnwidth]{./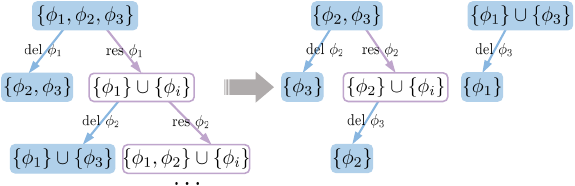}}
	\caption{Recursive conflict clique detection process}
	\label{fig5}
\end{figure}

During the process, conflict cliques implied in every maximal clique $ \hat{\varpi}_{\kappa}^{\varepsilon_\iota} $ are detected with the iteration of cliques. In each iteration, the given clique $ \dot{\varpi} $ satisfying feature 1) of the conflict clique is composed of the reserved part $ \omega $ and flexible part $ \theta $ in line 1 and checked if the feature 2) holds in line 2. Specifically, feature 2) is checked by comparing slot capacity $ \Lambda $ and the sub-clique removing the hyper-flow node with the minimum weight, that is, the first element $ \phi \in \dot{\varpi} $. The checking passed clique is verified as a conflict clique and gathered into set $ \dot{\Upsilon} $. At this time, its corresponding iteration ends in line 3. If feature 2) is not satisfied, it means that there exist conflict cliques among the proper sub-cliques of $ \dot{\varpi} $. Hence, by popping one hyper-flow $ \phi_i $ from the flexible part $ \theta $ of clique $ \dot{\varpi} $, the sub-cliques of $ \dot{\varpi} $ are filtered by feature 1) and then traversed for the next round of iterations in lines 4-7. In this way, the clique iterations start from maximal clique $ \hat{\varpi}_{\kappa}^{\varepsilon_\iota} $ and continue round by round with the scale of checked cliques decreasing. The iterations end until all conflict cliques are detected. In the same round, they shared the same clique scale. It is worth noting that in order to avoid iterating and checking the duplicate cliques, we distinguish the flexible part $ \theta $ and reserved part $ \omega $ for each detection process, which varies with rounds of iterations in lines 6-7. The former is traversed for clique iteration rather than the latter. It makes the sub-cliques unique like Fig. \ref{fig5} in each round of iterations.

In this way, all conflict cliques $ \dot{\Upsilon} $ are obtained from every overflow maximal clique $ \hat{\varpi}_{\kappa}^{\varepsilon_\iota} \in \hat{\Upsilon} $ in line 4 of Algo. 4. The conflict cliques on links $ \varepsilon_\iota \in \mathcal{E} $ are stored as $ \dot{\Upsilon}^{\varepsilon_\iota} $ and then aggregated as set $ \dot{\Upsilon} $ in line 5. On each link, the same conflict cliques $ \dot{\varpi}_{\kappa}^{\varepsilon_\iota} $ may be implied in multiple overflow maximal cliques $ \hat{\varpi}_{\kappa}^{\varepsilon_\iota} $. They are de-duplicated into one clique.

\subsubsection{Euclidean-based Slot Positioning}
Based on Theorem 2 and given conflict cliques $ \dot{\Upsilon} $, we aim to position the confluence slots $ \dot{s}^{\varepsilon_\iota} $ of every conflict clique $ \dot{\varpi}_{\kappa}^{\varepsilon_\iota} $ and backtracking their specific flow portions $ \dot{\mathcal{F}} $. Hence, a precise clique-aware slot positioning method is designed under the guidance of the following theorem.
\begin{theorem}
For $ n $ sequences as $ \aleph(k_i) = q_i +k_i \cdot \mathring{p}_i $, where $i \in \{1,2,...,n\}$, there are two equivalent propositions as
\begin{enumerate}
	\item For any sequence pair $ \aleph(k_{\alpha}) $ and $ \aleph(k_{\beta}) $, it satisfies that $ \gcd(\mathring{p}_{\alpha}, \mathring{p}_{\beta})|(q_{\alpha} -q_{\beta}) $;
	\item There exist a unique $ z \in \lbrack 0, \lcm(\mathring{P}^n) ) $ that satisfies the conditions $ \mathring{p}_i | (z -q_i) $,\ $ i \in \{1,2,...,n\} $.
\end{enumerate}
\label{3}
\end{theorem}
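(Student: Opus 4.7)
The plan is to leverage Lemma 1 (which already delivers the equivalence between condition (1) and the mere existence of some integer $z$ satisfying $\mathring{p}_i \mid (z-q_i)$) and then upgrade that existence statement to a unique representative inside the half-open interval $[0, \lcm(\mathring{P}^n))$. So Theorem 3 splits naturally into three pieces: existence (from (1)), uniqueness of the representative in the stated range, and the converse (2) $\Rightarrow$ (1).

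First, for the existence direction, I would simply invoke Lemma 1: assuming condition (1) holds pairwise, Lemma 1 guarantees some $z_0 \in \mathbb{Z}$ with $\mathring{p}_i \mid (z_0 - q_i)$ for every $i$. To land inside $[0, \lcm(\mathring{P}^n))$, I reduce modulo $L := \lcm(\mathring{P}^n)$; that is, set $z := z_0 \bmod L$. Since $\mathring{p}_i \mid L$ for each $i$, the congruence $z \equiv z_0 \equiv q_i \pmod{\mathring{p}_i}$ is preserved, so $z$ still satisfies all the divisibility requirements and lies in the target range.

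Next, for uniqueness, I would suppose two candidates $z_1, z_2 \in [0, L)$ both satisfy $\mathring{p}_i \mid (z_j - q_i)$ for all $i$ and $j \in \{1,2\}$. Subtracting gives $\mathring{p}_i \mid (z_1 - z_2)$ for every $i$, and by the defining property of the least common multiple, $L \mid (z_1 - z_2)$. Since $|z_1 - z_2| < L$, the only possibility is $z_1 = z_2$. This step is short but it is the conceptual core of what Theorem 3 adds over Lemma 1, and it is where the restriction to the interval of length $L$ is essential.

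Finally, the converse (2) $\Rightarrow$ (1) is routine: given such a $z$, for any pair $(\alpha, \beta)$ we have $\mathring{p}_\alpha \mid (z - q_\alpha)$ and $\mathring{p}_\beta \mid (z - q_\beta)$, hence $\gcd(\mathring{p}_\alpha, \mathring{p}_\beta)$ divides both $(z - q_\alpha)$ and $(z - q_\beta)$, and therefore divides their difference $(q_\beta - q_\alpha)$, which is condition (1). I do not anticipate any real obstacle here; the only subtle point worth flagging in writing is that the existence argument must explicitly reduce modulo $L$ rather than simply appeal to Lemma 1, since Lemma 1 only asserts existence in $\mathbb{Z}$, and that the uniqueness step genuinely requires the $\lcm$ property of the moduli (not merely pairwise $\gcd$ compatibility).
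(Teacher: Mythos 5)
Your proposal is correct and follows essentially the same route as the paper: invoke Lemma 1 for existence of an integer solution, reduce modulo $\lcm(\mathring{P}^n)$ to land in the stated interval, and prove uniqueness by observing that the difference of two candidates is divisible by every $\mathring{p}_i$, hence by the $\lcm$, and is smaller than it in absolute value. The only cosmetic difference is that you spell out the converse $(2)\Rightarrow(1)$ directly via the $\gcd$ argument, whereas the paper simply defers it to the ``vice versa'' clause of Lemma 1.
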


\begin{proof}
As an extension of Lemma 1, this theorem only needs to be verified its sufficiency from proposition 1) to 2).

From Lemma 1, if proposition 1) holds, there must exist $ z' \in ( -\infty, \infty) $ satisfying that $ \mathring{p}_i \mid (z' -q_i) $, $ i \in \{1,2,...,n\} $. Then, we set value $ z $ to $ z' \bmod \lcm(\mathring{P}^n) $. Through the divisibility theory, we have that $ \mathring{p}_i \mid ((z' \bmod \text{lcm}(P^n)) -q_i) $, that is, $ \mathring{p}_i \mid (z -q_i) $, where $ i \in \{1,2,...,n\} $ and $ z \in \lbrack 0, \lcm(\mathring{P}^n)) $. The existence of $ z \in \lbrack 0, \lcm(\mathring{P}^n) ) $ is proved.

Moreover, to verify the uniqueness of the above value $ z $, we assume that there exits two different $ z^{(1)}, z^{(2)} \in \lbrack 0, \lcm(\mathring{P}^n) ) $. They both satisfy the conditions in proposition 2). That is,
\begin{equation}
	\mathring{p}_i | (z^{(1)} -q_i),\ \mathring{p}_i | (z^{(2)} -q_i),\ i \in \{1,2,...,n\}.
\nonumber
\end{equation}
By the divisibility theory, we deduce that $ \mathring{p}_i \mid (z^{(1)} -z^{(2)}), i \in \{1,2,...,n\} $, which are further converted to $\text{lcm}(P^n) \mid (z^{(1)} -z^{(2)}) $. Then, owing to the definition $ z^{(1)}, z^{(2)} \in \lbrack 0, \lcm(\mathring{P}^n) ) $, the relationship $ | z^{(1)} -z^{(2)} | < \lcm(\mathring{P}^n) $ is confirmed. Hence, there must be $ (z^{(1)} -z^{(2)})  = 0 $, that is, $ z^{(1)} = z^{(2)} $. It violates the above assumption. The uniqueness of $ z $ holds.
\end{proof}

\begin{algorithm}[!t]
	\caption{Euclidean-based Slot Positioning}
	\KwIn{Conflict cliques: $ \{ \dot{\varpi}_{\kappa}^{\varepsilon_\iota} \}^{\varepsilon_\iota} \in \dot{\Upsilon} $ 
	}
	\KwOut{Re-scheduled sub-flows: $ \dot{f}_{\imath} \in \dot{\mathcal{F}} $
	
	}
	\For{$ \varepsilon_\iota $ in $ \mathcal{E} $}
	{
		\For{$ \dot{\varpi}_{\kappa}^{\varepsilon_\iota} $ in $ \dot{\Upsilon}^{\varepsilon_\iota} $}
		{
			$ \dot{q} \leftarrow 0 $ and $ \dot{p} \leftarrow 1 $\;
			\For{$ \phi^{\varepsilon_\iota} $ in $ \dot{\varpi}_{\kappa}^{\varepsilon_\iota} $}
			{	
				$ (q^{\varepsilon_\iota}, \mathring{p}) \leftarrow \phi^{\varepsilon_\iota} $\;
				$ x, y, \gcd(\dot{p}, \mathring{p}) \leftarrow $ EEA($ \dot{p}, \mathring{p} $)\;
				$ (\dot{q}, \dot{p}) \leftarrow $ update $ \dot{q} $ and $ \dot{p} $ by (21)\;
			}
			$ \{ f_i \} \leftarrow $ choose $ \phi^{\varepsilon_\iota} \in \dot{\varpi}_{\kappa}^{\varepsilon_\iota} $\;
			\For{$ f_i $ in $ \{ f_i \} $}
			{	
				$ \dot{f}_{\imath} \leftarrow $ get a sub-flow of $ f_i $ sharing attributes except $ \dot{p}_{\imath} \leftarrow \dot{p} $ and $ \dot{b}_{\imath} \leftarrow \dot{q}-(h_{i}^{\varepsilon_\iota}+o_i) $\;
				$ \dot{\mathcal{F}} \vartriangleleft \dot{f}_{\imath} $\; 
			}
		}
	}
\end{algorithm}

Due to the periodicity of sequences, it is similarly deduced that the unique $ z $ exists in each interval $ [k \cdot \lcm(\mathring{P}^n), (k+1) \cdot \lcm(\mathring{P}^n) ),\ k \in \mathbb{Z} $. These $ z $ are periodically aligned with period $ \lcm(\mathring{P}^n) $ and form a sequence as $ z + k \cdot \lcm(\mathring{P}^n) $. It formalizes the confluence slots $ \dot{s}^{\varepsilon_\iota} $ of conflict clique $ \dot{\varpi}_{\kappa}^{\varepsilon_\iota} \in \dot{\Upsilon} $. Further, the specific $ z $ is deduced and flow portions are backtracked by the Euclidean-based method shown in Algo. 5.

With the conflict cliques $ \dot{\Upsilon} $ traversal in lines 1-2, each of them $ \dot{\varpi}_{\kappa}^{\varepsilon_\iota} $ is positioned its confluence slots via the processes in lines 3-7. Specifically, the hyper-flows $ \phi^{\varepsilon_\iota} $ in clique $ \dot{\varpi}_{\kappa}^{\varepsilon_\iota} $ participate in confluences one by one. The initial confluence slots is formalized as $ \dot{q} + \dot{k} \cdot \dot{p} = 0 + \dot{k} \cdot 1 $ in line 3. With the confluence of $ \phi^{\varepsilon_\iota} $ in turn, the values of $ \dot{q} $ and $ \dot{p} $ are updated in lines 4-7. Within each turn, we aim to get the confluence slots $ \ddot{q} +\ddot{k} \cdot \ddot{p} $ of the existing $ \dot{q} + \dot{k} \cdot \dot{p} $ and newly joined $ \phi^{\varepsilon_\iota} $ with $ q_i^{\varepsilon_\iota} +k_i^{\varepsilon_\iota} \cdot \mathring{p}_i $, which are expressed as
\begin{equation}
	\{ \ddot{q} +\ddot{k} \cdot \ddot{p} \mid \dot{q} + \dot{k} \cdot \dot{p} = q_i^{\varepsilon_\iota} + k_i^{\varepsilon_\iota} \cdot \mathring{p}_i,\ \dot{k}, k_i^{\varepsilon_\iota} \in \mathbb{Z} \}.
	\nonumber
\end{equation}
Based on the above inferences, period $ \ddot{p} $ is given as $ \lcm(\dot{p}, \mathring{p}_i) $. Then, considering the uniqueness of $ \ddot{q} $ from Theorem 1, only a pair of specific $ \tilde{x} , \tilde{y} $ satisfying
\begin{equation}
	\dot{q} +\tilde{x} \cdot \dot{p} = q_i^{\varepsilon_\iota} +\tilde{y} \cdot \mathring{p}_i 
\end{equation}
is required to get the value $ \ddot{q} $ with $ (\dot{q} +\tilde{x} \cdot \dot{p}) \bmod \ddot{p} $. By the extended Euclidean algorithm (EEA), a pair of feasible $ x, y $ satisfying that $ x \cdot \dot{p} +y \cdot \mathring{p}_i = \gcd(\dot{p}, \mathring{p}_i) $ is given to convert (18) into 
\begin{equation}
	\dfrac{\tilde{x} \cdot \dot{p} -\tilde{y} \cdot \mathring{p}_i}{x \cdot \dot{p} + y \cdot \mathring{p}_i} = \dfrac{q_i^{\varepsilon_\iota} -\dot{q}}{\gcd(\dot{p}, \mathring{p}_i)}.
\end{equation}
Further, the feasible $ \tilde{x} $ and $ \tilde{y} $ in (19) are obtained by enhancing the above equiproportional relation as
\begin{equation}
	\dfrac{\tilde{x} \cdot \dot{p}}{x \cdot \dot{p}} = \dfrac{-\tilde{y} \cdot \mathring{p}_i}{y \cdot \mathring{p}_i} = \dfrac{q_i^{\varepsilon_\iota} -\dot{q}}{\gcd(\dot{p}, \mathring{p}_i)},
\end{equation}
which gives the specific value $ \tilde{x} $ as $ x \cdot (q_i^{\varepsilon_\iota} -\dot{q}) / \gcd(\dot{p}, \mathring{p}_i) $ and $ \tilde{y} $ satisfying (18), (19). Thus, the baseline $ \ddot{q} $ of confluence slots is obtained. It and period $ \ddot{p} $ together identify the confluence slots of the existing $ \dot{q} + \dot{k} \cdot \dot{p} $ and newly joined $ q_i^{\varepsilon_\iota} +k_i^{\varepsilon_\iota} \cdot \mathring{p}_i $ as the tuple
\begin{equation}
	(\ddot{q}, \ddot{p}) = ((\dot{q}_i +\tilde{x} \cdot \dot{p}_i) \bmod \ddot{p}, \lcm(\dot{p}, \mathring{p}_i)).
\end{equation}

With the above way performed cyclically in lines 6-7 , the confluence slots $ \dot{s}^{\varepsilon_\iota} $ of conflict clique $ \dot{\varpi}_{\kappa}^{\varepsilon_\iota} $ are precisely positioned without violently retrieving all occupied slots of flows. And then, we follow Theorem 2 to choose any one hyper-flow $ \phi^{\varepsilon_\iota} $ with flows $ { f_i } $ in this conflict clique. The re-scheduled flow portions $ \{ f_i^{\dot{\varpi}} \} $ are backtracked in line 8. Since they are the sub-sequences of flows $ { f_i } $, these flow portions are identified as sub-flows $ \dot{f}_{\imath} \in \dot{\mathcal{F}} $ in lines 9-10, where each of them inherits the flow $ f_i $ attributes except the period $ \dot{p}_{\imath} $ and baseline time of generation $ \dot{b}_{\imath} $. All backtracked sub-flows are aggregated as $ \dot{\mathcal{F}} $ without duplication in line 11.

\subsubsection{Low-cost sub-flow re-scheduling}
After backtracking the sub-flows $ \dot{\mathcal{F}} $, their fine-tuning is needed to avoid slot overflow and rectify the preliminary solution $ \mathcal{O} $ as slightly as possible. Hence, we design a low-cost HFG method as a variant of Algo. 2 to re-schedule sub-flows $ \dot{\mathcal{F}} $. It follows the incremental and hyper-flow graph based pattern, but with a lower computation cost to reduce the complexity.

As shown in Algo. 6, the sub-flows $ \dot{f}_{\imath} \in \dot{\mathcal{F}} $ are re-scheduled one by one in line 1. Each of them is allocated with the feasible slots by the offset $ \dot{o} $ search in lines 3-7. Relative to the whole flow, the new offset of sub-flow $ \dot{f}_{\imath} $ causes the flow jitter limited by $ \dot{j}_{\imath} $ in jitter constraint (5). Hence, the offset search range is bounded from $ (o_i-\dot{j}_{\imath})^+ $ to bound $ \bar{o}_{\imath} $ in line 2. For each offset $ \dot{o} $, its corresponding maximum slot occupancy $ \bar{\zeta}_o $ is obtained in line 4 based on the hyper-flow graphs and maximal cliques $ \Upsilon $. The specific processes follow lines 6-16 of Algo. 2. To avoid slot overflow, the feasible offset is filtered with condition $ \bar{\zeta}_o \leq \Lambda $. The first found among them is given
as the optimal offset $ \dot{o}_{\imath} $ in lines 5-6 to reduce scheduling complexity. If there is no feasible offset, the whole scheduling fails and stops in lines 8-9. Otherwise, the sub-flows with optimal offsets are stored as the supplemental solution $ \dot{\mathcal{O}} $ in line 10. The hyper-flows $ \Phi $, their weights $ \mathcal{L} $ and maximal cliques $ \Upsilon $ are updated via lines 24-29 of Algo. 2.

With this low-cost method, slot overflow is avoided and then the schedulability and load-balancing are improved. At the end of GH$^2$, the supplemental solution $ \dot{\mathcal{O}} $ and preliminary $ \mathcal{O} $ are aggregated as a complete scheduling solution.

\begin{algorithm}[!t]
	\caption{Low-cost HFG Scheduling Method}
	\KwIn{Re-scheduled sub-flows: $ \dot{f}_{\imath} \in \dot{\mathcal{F}} $ \\
		Sub-flow attributes: $ \{ \dot{l}_{\imath}, \dot{p}_{\imath}, \dot{b}_{\imath}, \dot{d}_{\imath}, \dot{j}_{\imath}, \dot{R}_{\imath} \} \in \dot{\mathcal{A}} $ \\
		Available slot capacity: $ \Lambda $ \\
		Global maximal cliques: $ \{ \varpi_{\kappa}^{\varepsilon_\iota} \}^{\varepsilon_\iota} \in \Upsilon $
	}
	\KwOut{Re-scheduling solution: $ (\dot{f}_{\imath}, \dot{o}_{\imath}) \in \dot{\mathcal{O}} $
	}
	\For{$ \dot{f}_{\imath} $ in $ \dot{\mathcal{F}} $}
	{	
		$ \dot{o} \leftarrow (o_i-\dot{j}_{\imath})^+ $ and $ \bar{o}_{\imath} \leftarrow \min \{ o_i+\dot{j}_{\imath}, \bar{o}_i \} $\;
		\While{$ \dot{o} < \bar{o}_{\imath} $}
		{
			$ \bar{\zeta}_o, \{ \phi^{\varepsilon_\iota} \}, \{ \tilde{\Upsilon}_{\phi}^{\varepsilon_\iota} \} \leftarrow $ follow line 6-16 in Algo. 2\;
			\If{$ \bar{\zeta}_o \leq \Lambda $}
			{
				$ \dot{o}_{\imath} \leftarrow \dot{o} $ and \textbf{break}\;
			}
			$ \dot{o} \leftarrow \dot{o} +1 $\;
		}
		\If{$ \dot{o}_{\imath} $ not exists}
		{
			Scheduling fails and \textbf{break}\;
		}
		$ \dot{\mathcal{O}} \vartriangleleft (\dot{f}_{\imath}, \dot{o}_{\imath}) $ and update $ \Phi, \mathcal{L} $ and $ \Upsilon $ following line 24-29 in Algo. 2\;
	}
\end{algorithm}

\newtheorem{remark}{Remark}
\begin{remark}
With the flow graph \cite{9714183}, the CCR re-scheduling methods also apply by only adjusting the process object from hyper-flow to flow nodes.
\label{1}
\end{remark}
\begin{remark}
The flow fine-tuning phase is actually the decomposition and re-scheduling of flows resulting in slot overflow. For these sub-flows that fail in the above re-scheduling due to slot overflow, their further decomposition as a new re-scheduling round could better balance the load. Ignoring the constraints of latency and jitter, the CCR re-scheduling round by round tends to balance the slot loads ideally.
\label{2}
\end{remark}
\begin{remark}
The scheduling procedure of GH$^2$ is also suitable for online and dynamic network scenarios, where the time-sensitive applications or network structures evolve over time. Once that occurs, GH$^2$ could cleanup silent application flows from previous scheduling solution $ \mathcal{O} $ and slot occupancy $ \Upsilon $. Then, newly joined flows are scheduled by Algo. 1 to update the current solution. In this way, the on-the-fly network system is re-configured efficiently without affecting in-transit flows.
\label{3}
\end{remark}

\section{Simulation and Evaluation}
In this section, we conduct a comprehensive analysis of GH$^2$ scheduling algorithm for its scalability in terms of schedulability, scheduling efficiency, real-time transmission ability and load balancing ability.

\subsection{Simulation Setup}
All the simulations are run in the Intel (R) Eight-Core (TM) i7-9700HQ CPU with 3.0 GHz and 16 GB of RAM.

\subsubsection{Network Parameters}
Compared to CEV network \cite{10101832, 8700610} with more switches, simulation networks are constructed with 8 TSN switches configured statically by the CQF model. It facilitates the network simulation with more intensive flow confluence to validate scheduling performances. The specific switch topologies are networked as the linear, ring and tree forms shown in Fig. \ref{fig6}, respectively. These topologies could assemble almost all structures of industrial networks. With the typical settings of 125 us slot length $ T $, $<$2 us synchronization errors $ \delta^{\varepsilon_\iota} $, 1 Gbit/s link bandwidth $ \Gamma $, 1 Mb queue depth $ \mho^{\varepsilon_\iota} $ and 80\% distribution factor $ \gamma $, the available slot capacity $ \Lambda $ is given to 12.3 KB.

\begin{figure}[!h]
	\centerline{\includegraphics[width=7cm]{./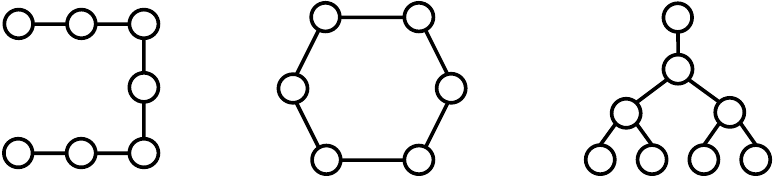}}
	\caption{Schematic of three simulation switch topologies}
	\label{fig6}
\end{figure}

\subsubsection{Flow Attributes}
Referring to the TSN profile standard for industrial automation \cite{9714184}, we generate simulation flow cases that close to actual industrial scenarios with standardized attributes illustrated in Table \ref{tab2}.

\begin{figure*}[!h]
	\centering
	\subfloat[\label{fig:a}]{
		\hspace{-0.9cm}
		\includegraphics[width=6.4cm]{./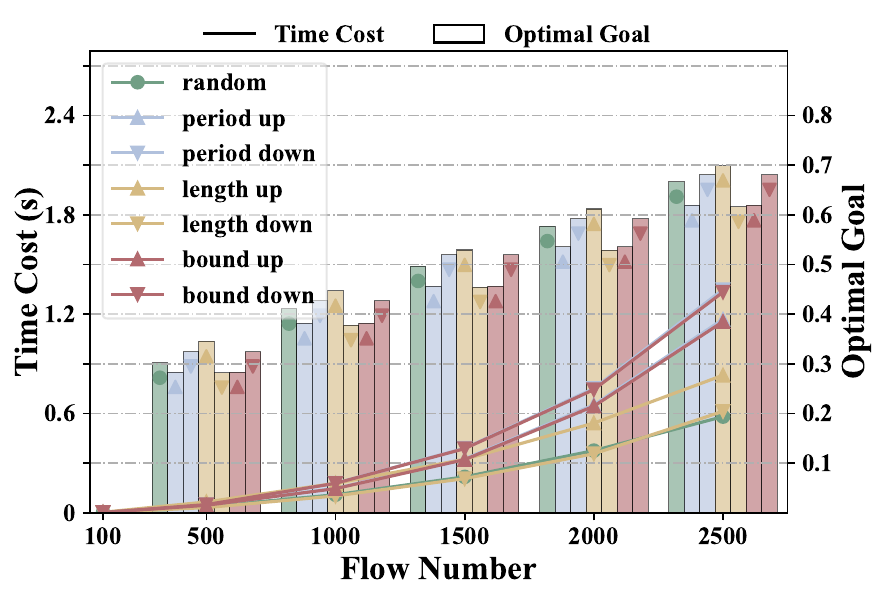}}
	\subfloat[\label{fig:b}]{
		\includegraphics[width=6.4cm]{./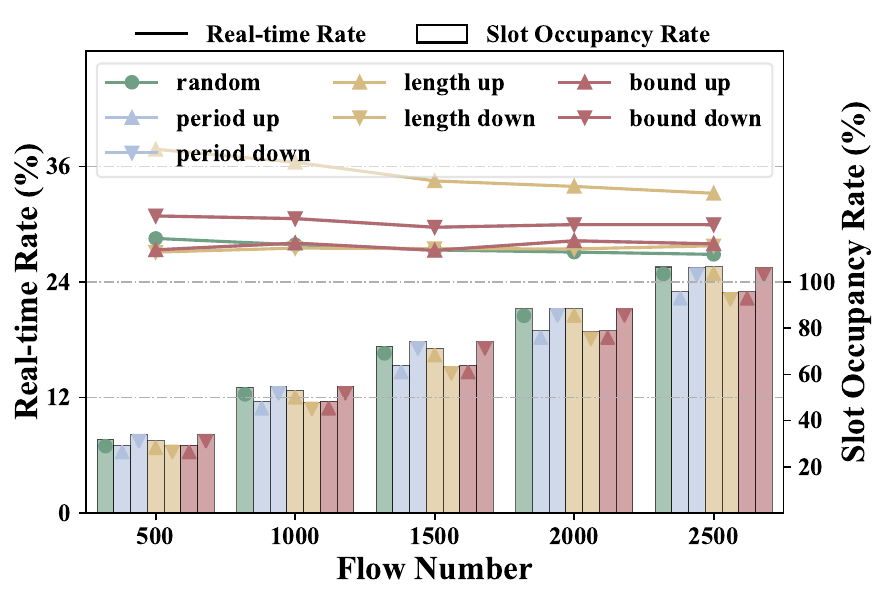}}
	\subfloat[\label{fig:c}]{
		\includegraphics[width=6.4cm]{./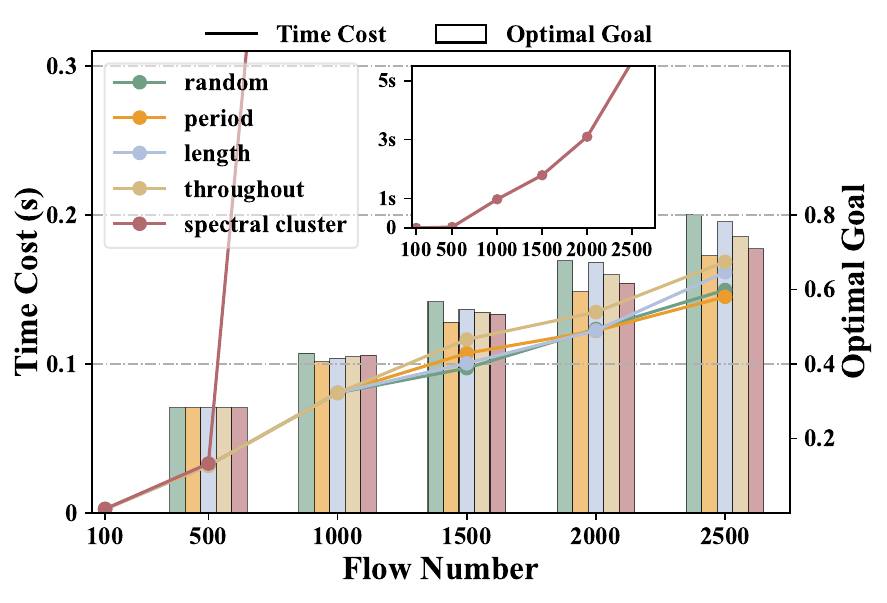}}
	
	\subfloat[\label{fig:a}]{
		\hspace{-0.9cm}
		\includegraphics[width=6.4cm]{./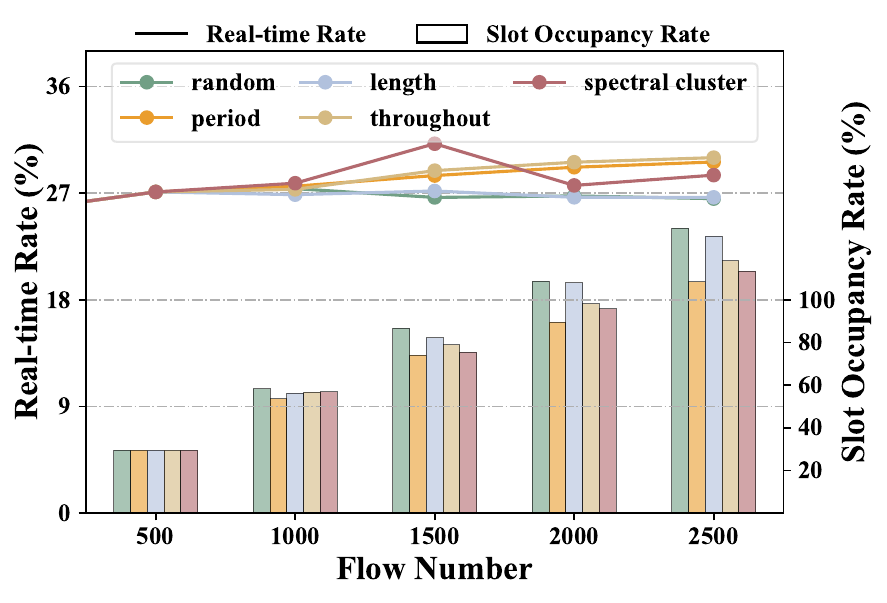}}
	\subfloat[\label{fig:b}]{
		\includegraphics[width=6.4cm]{./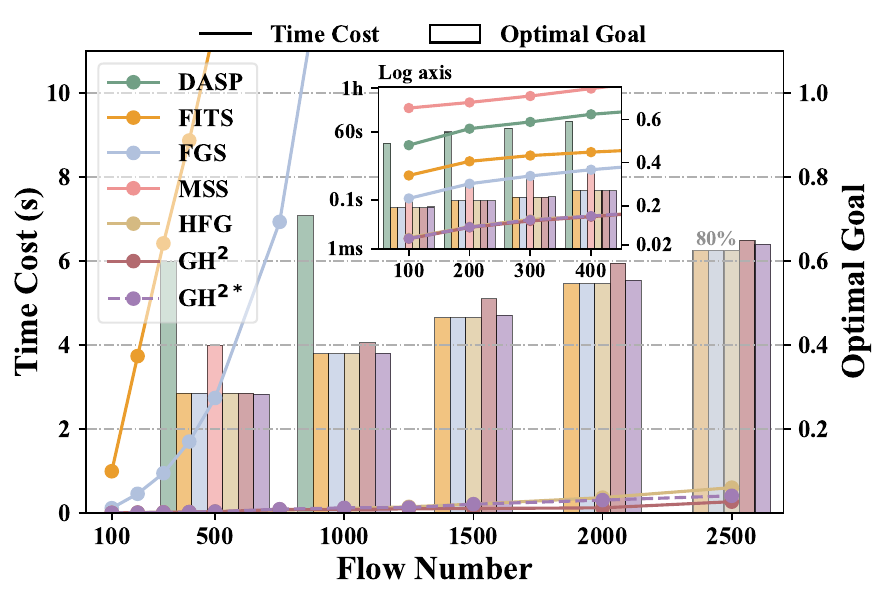}}
	\subfloat[\label{fig:c}]{
		\includegraphics[width=6.4cm]{./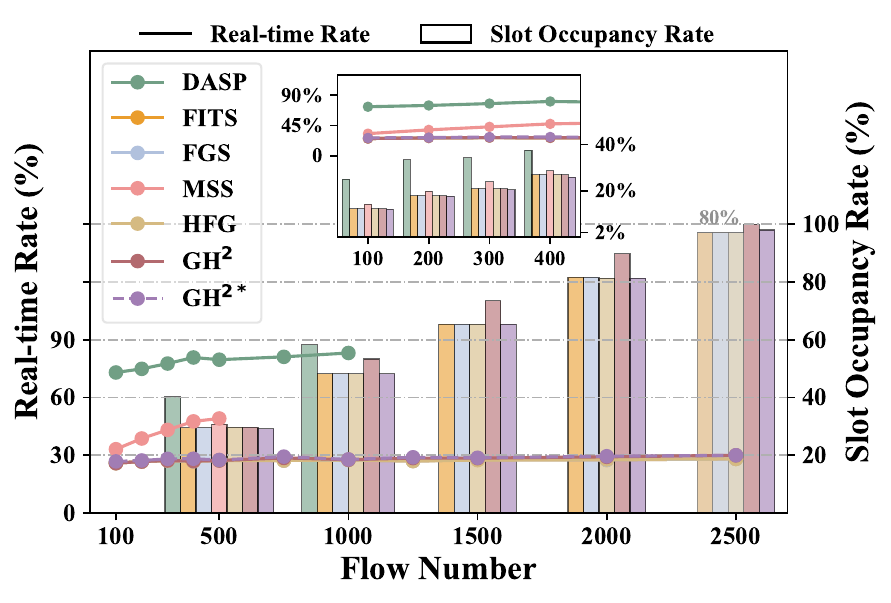}}
	
	\caption{(a) \& (b) Sorting strategy comparison, (c) \& (d) Partition strategy comparison (e) \& (f) Algorithm comparison}
	\label{simulation1}
\end{figure*}

\begin{table}[!t]
	\caption{Flow Attributes for Simulation}
	\renewcommand{\arraystretch}{1.25}
	\setlength{\tabcolsep}{15pt}
	\begin{center}
		\begin{tabular}{p{70pt}|p{90pt}}
			\hline
			Attribute& 
			Value range of flow cases $^{\mathrm{a}}$ \\
			\hline
			frame length $ l_i $& 
			64 Bytes $ \thicksim $ 1500 Bytes \\
			flow period $ p_i $& 
			Type 1: 1 ms $ \thicksim $ 200 ms \par Type 2: 1 ms $ \thicksim $ 400 ms \par Type 3: 1 ms $ \thicksim $ 600 ms \par Type 4: 1 ms $ \thicksim $ 800 ms \\
			basetime $ b_i $& 
			0 ms $ \thicksim $ $ p_i $ ms \\
			allowable latency $ d_i $& 
			0.1$ \cdot p_i $ ms $ \thicksim $ 0.5 $ \cdot p_i $ ms \\
			allowable jitter $ j_i $& 
			500 us $ \thicksim $ 0.1$ \cdot p_i $ ms \\
			hop count $ h_i $& 
			1 hop $ \thicksim $ 6 hops \\
			\hline
			\multicolumn{2}{p{160pt}}{$^{\mathrm{a}}$Both unicast and multicast flows are considered.}
		\end{tabular}
	\end{center}
	\label{tab2}
\end{table}

\subsection{Simulation Results}
To obtain accurate and stable scheduling performances, we repeat 20 tests for each simulation with different flow cases. The final results are given by averaging. Except for specific parameter evaluations, the optimization weight factor $ \rho $ is set as 0.5 to balance the average real-time transmission rate and global maximum slot occupancy rate (abbreviated as real-time rate and slot occupancy rate later). The partition scale $ \Xi $ is set as 500 flows to balance the scheduling time and optimality. Moreover, the normal simulations adopt the linear topology and flow cases with 1ms $ \thicksim $ 200ms period interval.

\subsubsection{Sorting Strategy Comparison}
In incremental scheduling of the HFG method for each partition, different flow orders are compared to filter the optimal sorting strategy, including the random, down or up order by flow period, frame length and offset bound \cite{10001004}, respectively. As shown in Fig. \ref{simulation1}(a)\&(b), the length descent sorting reflects the fastest scheduling efficiency and best optimization capability. Separately, both the real-time rate and slot occupancy rate of it rank at a relatively low echelon compared to the other sorting strategies. Thus, we adopt the down order of length as the optimal sorting strategy during the following simulations.

\subsubsection{Partition Strategy Comparison}
This simulation works to compare the different partition methods and filter the optimal one. The flow cases are partitioned by the flow period, frame length, bandwidth consumption illustrated in (9) and the SOTA randomly generating \cite{8342096}, spectral cluster \cite{8889667}, respectively. These flow partitions are scheduled via the first three phases of GH$^2$. As reflected in Fig. \ref{simulation1}(c), the period-based flow partitioning shows relatively superior performances under different flow scales in terms of the time cost and optimal goal. Fig. \ref{simulation1}(d) shows that the period-based partitioning is superior at slot occupancy rate, while length-based partitioning benefits the real-time rate. Synthesizing the above results, we validate the benefits of attribute-driven flow partitioning and adopt the period-based strategy for GH$^2$.

\subsubsection{Re-scheduling method Comparison}
In the flow fine-tuning phase, the systematic CCR re-scheduling method can be performed with the hyper-flow or flow graph based pattern. To evaluate their re-scheduling capability and then filter the optimal way, we decrease the available slot capacity $ \Lambda $ in gradient under 2000 flows. It starts from the slot occupancy rate after the first three phases for GH$^2$. As shown in Fig. \ref{simulation2}, with the decreasing of slot capacity, the re-scheduling time grows due to the increasing conflict cliques and re-scheduled sub-flows, while the real-time rate remains almost invariant owing to precisely fine-tuning. Comparing the above two ways, we adopt the faster hyper-flow graph based CCR method for GH$^2$.
\begin{figure}[!t]
	\centerline{\includegraphics[width=6.8cm]{./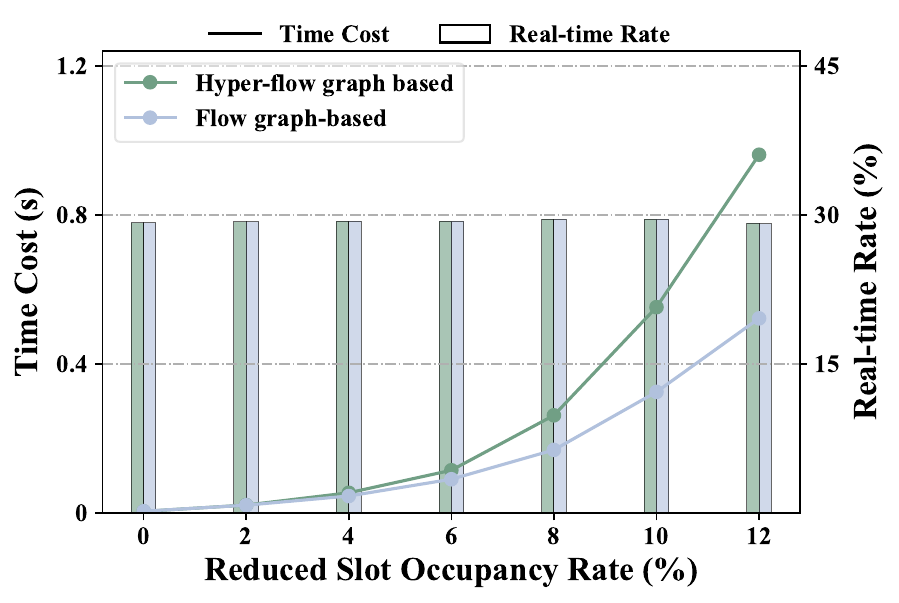}}
	\caption{Re-scheduling method comparison}
	\label{simulation2}
\end{figure}

\subsubsection{Algorithm Comparison}
Adopting the above strategies, our proposed GH$^2$ algorithm is compared its scheduling performances with SOTAs as DASP \cite{8889667}, FITS \cite{9407828}, FGS \cite{9714183}, MSS \cite{9893358} for DSP. They pursue the same optimization goal (8). As shown in Fig. \ref{simulation1}(e)\&(f), our proposed HFG method reflects its Pareto optimization including higher scheduling efficiency, more balanced slot loads and lower transmission latency, while its slot occupancy rate and real-time rate are equal to these of FITS and FGS since their same offset filtering condition. Furthermore, GH$^2$ improves the scheduling efficiency and schedulability at different flow scales, especially in large-scale complex flow scenarios. Under 1000 flows, its runtime is reduced to nearly 1/300, 1/200 of the SOTA FITS, FGS algorithms, respectively. Under 2000 flows, the ratios are 1/430, 1/1000. Besides, when the flow scale grows to 2500, the SOTA algorithms are hard to get a feasible solution. GH$^2$ improves 80\% schedulability of FITS and FGS methods to 100\%. Although the load balancing ability is sacrificed due to the separate partition scheduling that restricts the global optimality, GH$^2$ can improve it with the CCR re-scheduling method, where the results are marked as GH$^{2*}$. In this way, GH$^2$ constructs the Pareto front of scheduling efficiency and load balancing, and verifies its better scalability in various flow scenarios.

Moreover, three scheduling patterns shown in Fig. \ref{fig1_3} and adopted in FITS, FGS, HFG are compared for their computing complexity with the scheduling information scale (abbreviated as Sch-Info scale), respectively. Fig. \ref{simulation3} reflects that the hyper-flow graph based and flow graph-based scheduling patterns compress the slot occupancy information of the frame-based pattern exponentially. It is because of the desensitization of flow attributes. Compared to the flow graph-based pattern, the hyper-flow way shrinks the graph scale significantly. These explain and confirm the scheduling efficiency improvement of our HFG and GH$^2$ algorithms. Then, the scheduling efficiency stability of GH$^2$ compared to SOTAs is evaluated under different flow scales. As shown in Fig. \ref{simulation4}, GH$^2$ reflects the least time fluctuation and significant stability improvement. When more re-scheduling occurs as GH$^{2*}$, its stability is relatively reduced due to the scale uncertainty of re-scheduled sub-flows, but still superior to these SOTA algorithms.
\begin{figure}[!t]
	\centerline{\includegraphics[width=6.4cm]{./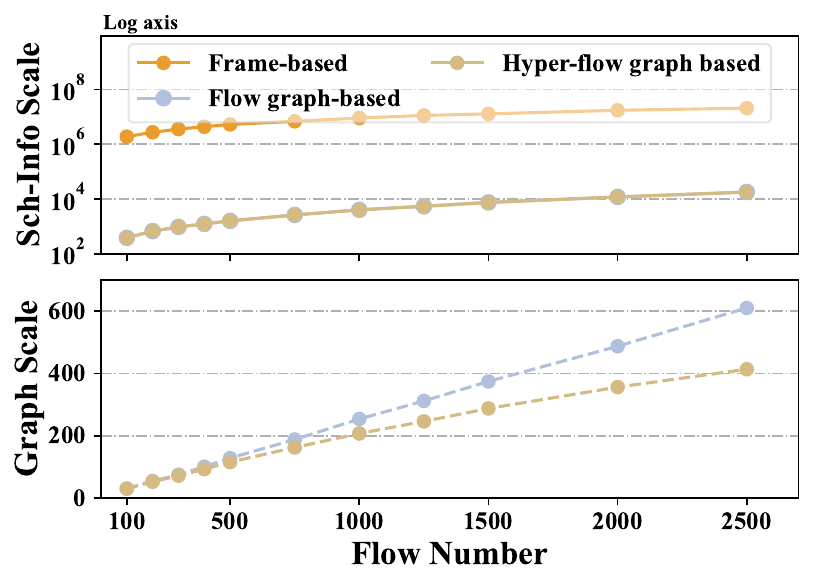}}
	\caption{Scheduling pattern comparison}
	\label{simulation3}
\end{figure}

\begin{figure}[!h]
	\centerline{\includegraphics[width=6.4cm]{./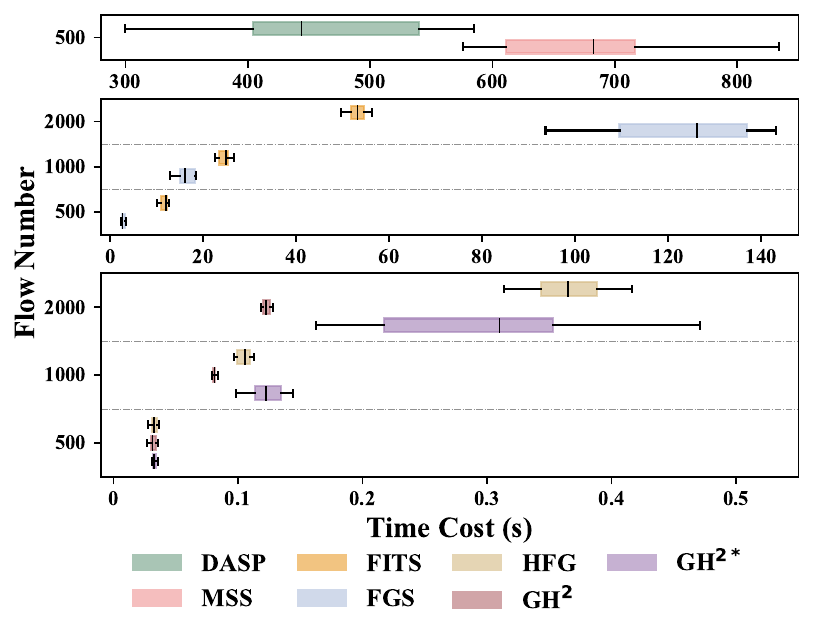}}
	\caption{Algorithm stability comparison}
	\label{simulation4}
\end{figure}

\begin{figure*}[!t]
	\centering
	\subfloat[\label{fig:a}]{
		\hspace{-0.95cm}
		\includegraphics[width=6.4cm]{./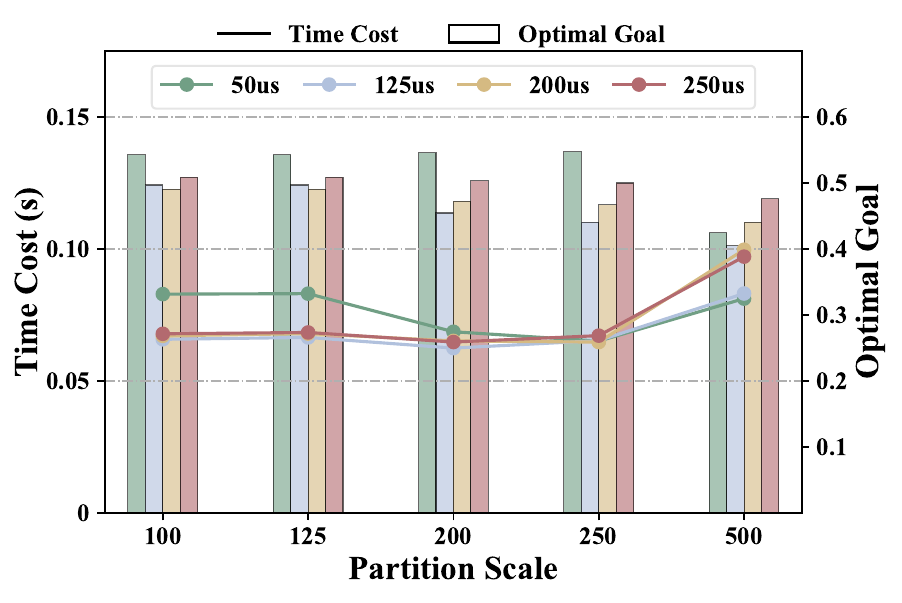}}
	\subfloat[\label{fig:b}]{
		\includegraphics[width=6.4cm]{./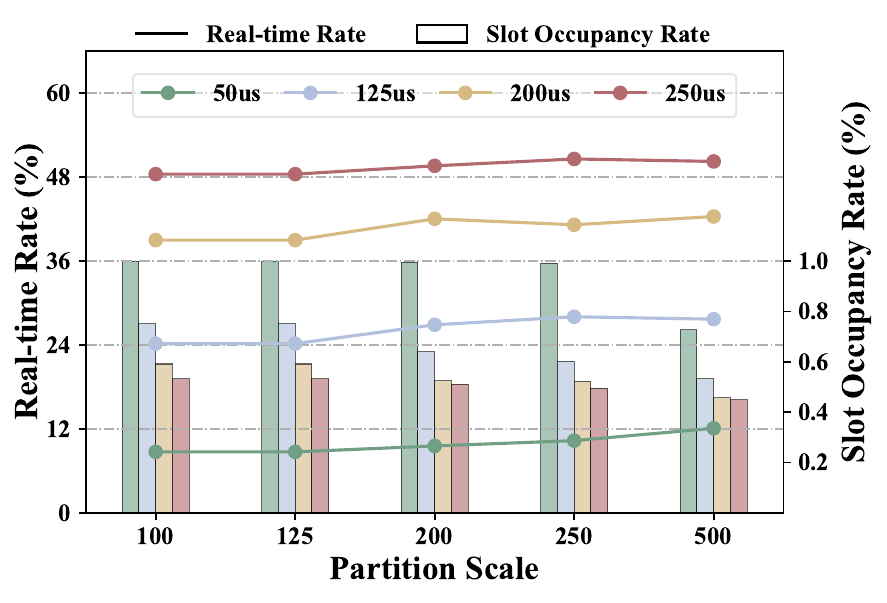}}
	\subfloat[\label{fig:c}]{
		\includegraphics[width=6.4cm]{./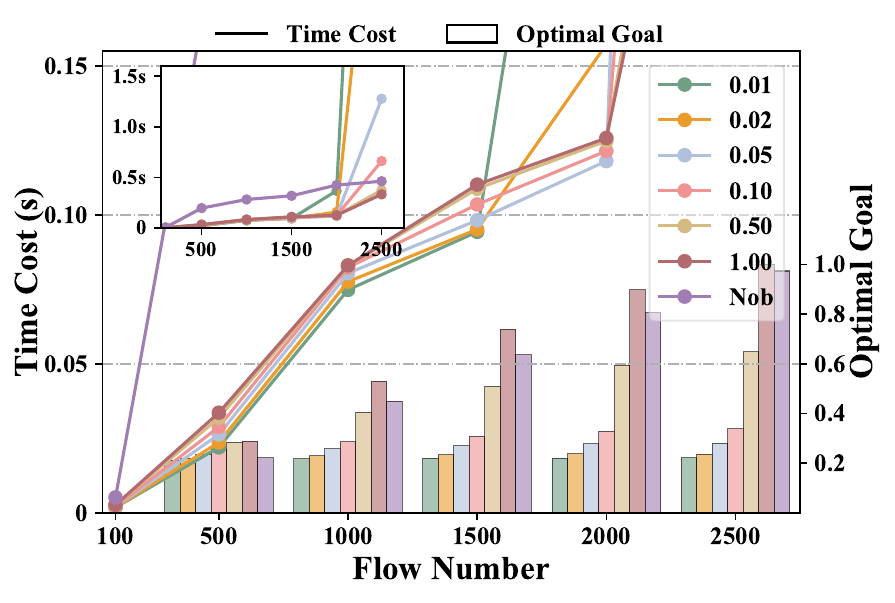}}
	
	\subfloat[\label{fig:a}]{
		\hspace{-0.94cm}
		\includegraphics[width=6.4cm]{./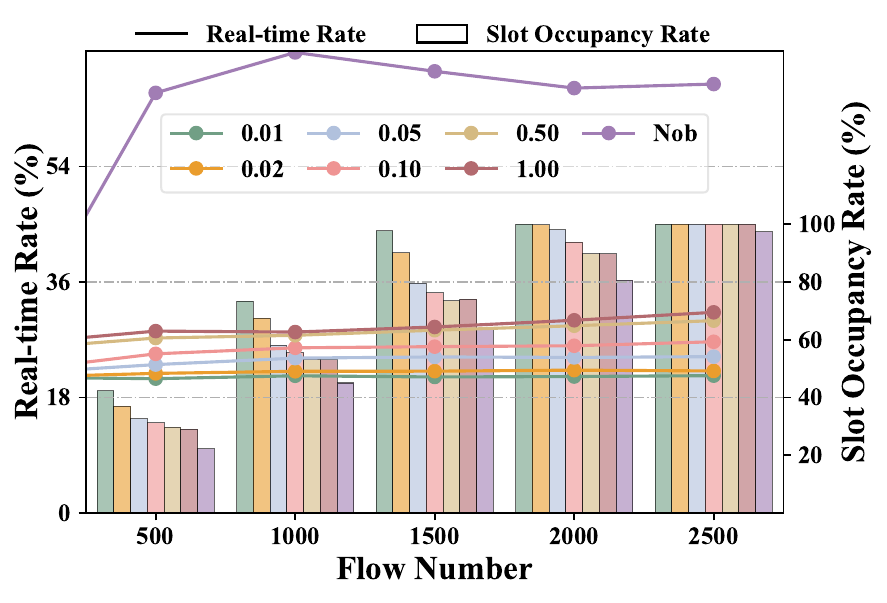}}
	\subfloat[\label{fig:b}]{
		\includegraphics[width=6.4cm]{./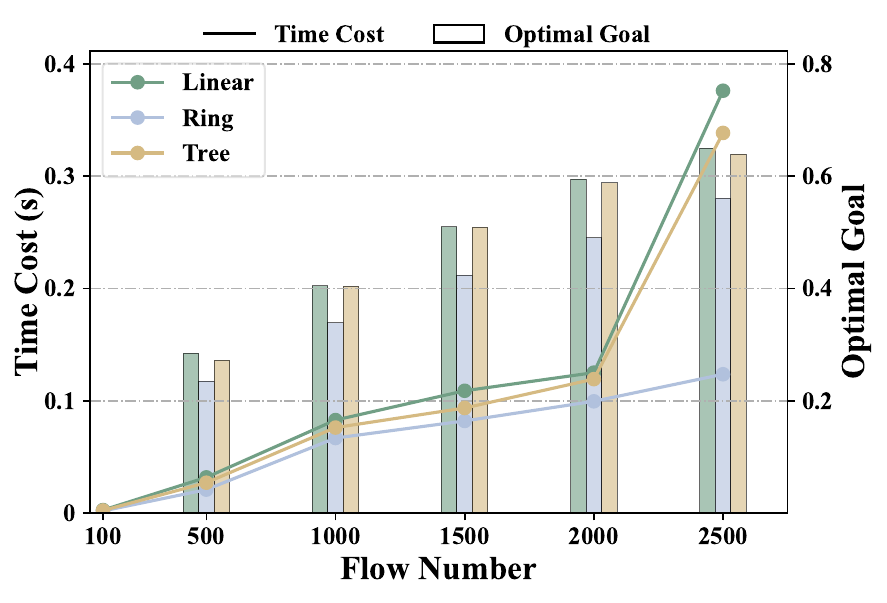}}
	\subfloat[\label{fig:c}]{
		\includegraphics[width=6.4cm]{./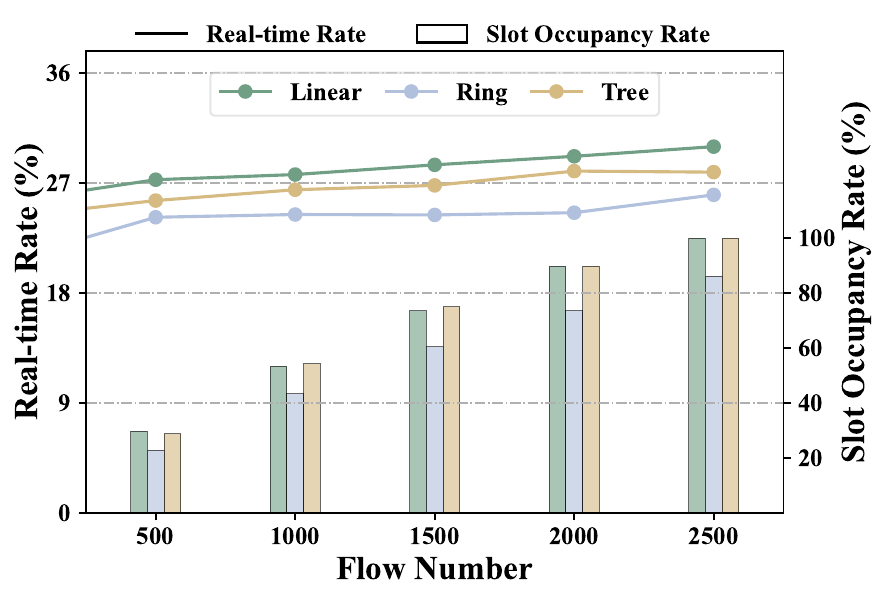}}
	
	\caption{(a) \& (b) Partition scale and slot length evaluation, (c) \& (d) Optimization weight factor evaluation, (e) \& (f) Network topology influence}
	\label{simulation5} 
	\renewcommand{\thefigure}{8}
\end{figure*}

\subsubsection{Partition Scale \& Slot length Evaluation}
In GH$^2$, the key parameters containing partition scale $ \Xi $ and slot length $ T $ are evaluated under 1000 flows for their influences on scheduling performances. From Fig. \ref{simulation5}(a)\&(b), we conclude that both the increasing partition scale and slot length contribute to improving load balancing while raising the transmission latency. Compared to the partition scale, the slot length shows a greater impact. It allows them to work together for better performances. For the scheduling efficiency, it is influenced conjointly by these two parameters. With their synergizing, GH$^2$ consistently schedules 1000 flows within 100 ms.

\subsubsection{Optimization Weight Factor Evaluation}
Just like above, we evaluate the influences of optimization weight factor $ \rho $ for GH$^2$ at different flow scales. Besides different weight factors, another local optimization goal from \cite{9714183} is evaluated with the expression as $ \min_{o_{\epsilon}}{ \frac{\tilde{\zeta}_{\epsilon}}{\Lambda} } $ and marked as $ \text{Nob} $. $ \tilde{\zeta}_{\epsilon} $ is the maximum occupancy among slots containing current scheduled flow $ f_{\epsilon} $ and not applied for the early-break strategy. The simulation results are shown in Fig. \ref{simulation5}(c)\&(d). As the weight factor grows, the scheduling time and real-time rate increase while the slot occupancy rate decreases. Hence, the factor $ \rho $ is conformed for its effect on balancing the above two optimization goals, which extend the Pareto front constructed by GH$^2$. Specially, when flows reach a certain scale, slot overflow occurs and a relatively higher time cost is consumed by the re-scheduling. The smaller the factor $ \rho $, the earlier this occurs. Moreover, the $ \text{Nob} $-directed scheduling shows the highest time cost, real-time rate and lowest slot occupancy, which suits the scenarios with stronger load balancing preferences.

\subsubsection{Network Topology Influence}
To verify the universality of GH$^2$, we evaluate its performances in linear, ring and tree topologies, respectively, where different simulation flow cases share the same attributes except their topologies. As shown in Fig. \ref{simulation5}(e)\&(f), GH$^2$ manifests the low scheduling complexity and superior real-time transmission \& load balancing abilities in all these topologies. Among them, the ring topology shows better performances since its relatively short route from the source to the destination host.

\subsubsection{Period Attribute Influence}
Considering the period influence discussed in Section III, we evaluate the scheduling stability of GH$^2$ under different flow period intervals. Four types of flow cases as shown in Table \ref{tab2} are scheduled at different flow scales. Fig. \ref{simulation6}(a)\&(b) reflect that all of them have low scheduling runtime and similar optimization performances, where the extension of period interval somewhat raises the scheduling efficiency and improves both the real-time transmission and load balancing abilities. It further confirms the scalability for various flow scenarios and desensitization for flow attributes of GH$^2$.
\begin{figure}[!t]
	\centering
	\subfloat[\label{fig:a}]{
		\includegraphics[width=6.4cm]{./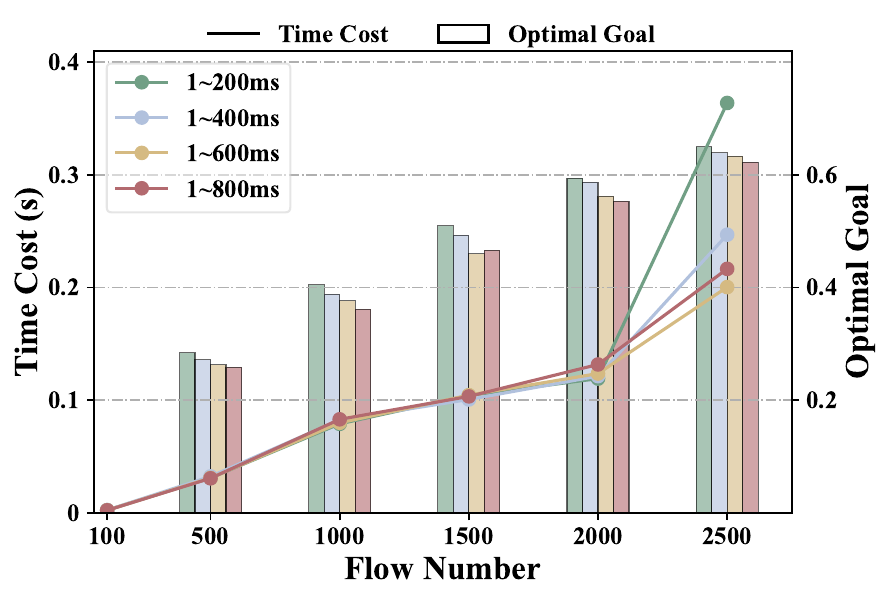}}
	
	\subfloat[\label{fig:a}]{
		\includegraphics[width=6.4cm]{./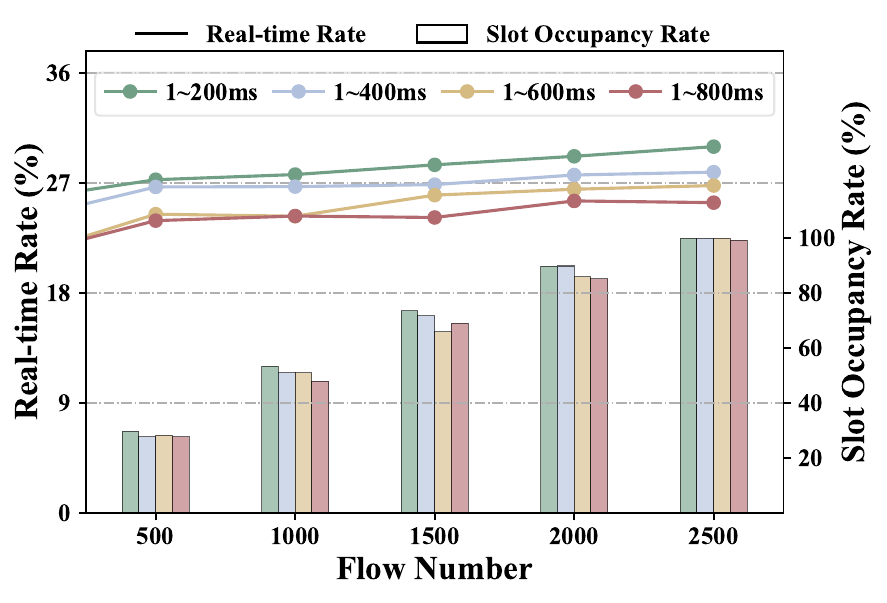}}
	\caption{(a) \& (b) Period attribute influence}
	\label{simulation6} 
\end{figure}

\section{Conclusion}
In this paper, we investigate the scalable scheduling for the CQF-based DSP with large-scale complex industrial flows. Considering its high complexity and limited optimality, we deeply mine the attribute-driven scheduling features and develop a hyper-flow graph based scheduling scheme. The hyper-flow graph is built by taking the similar flow sets as the hyper-flow nodes and proved its equivalence with flow attribute-sensitive scheduling and re-scheduling information. It embeds the redundant scheduling information in less maximal cliques, and precisely reverse maps them to overflow flow portions. Under the systematical guidance of the above methodology, a “flow division-conquering-synthesis-fine tuning” hierarchical framework is built to optimize and balance the scheduling complexity and optimality. Its parallel scheduling reduces the device and flow scale induced complexity, while the precise fine-tuning improves the schedulability and load balancing. Further, this scheme is refined as a comprehensive scheduling algorithm GH$^2$, including the attribute-driven lightweight partitioning, parallel HFG scheduling, parallel Bron-Kerbosch synthesizing and precise CCR re-scheduling methods. Simulation results demonstrate the superiority of GH$^2$ in terms of the scheduling efficiency, schedulability and QoS performances along a Pareto front. This scheme can also be extended into the online and dynamic network or other time-related scheduling scenarios.

\bibliographystyle{ieeetr}
\bibliography{dscp.bib}

\clearpage

\vfill

\end{document}